\documentclass[twocolumn, twoside]{IEEEtran}

\usepackage[utf8]{inputenc}
\usepackage[T1]{fontenc}
\usepackage{url}
\usepackage{ifthen}
\usepackage{cite}
\usepackage{booktabs}
\usepackage[cmex10]{amsmath} 

\usepackage[pdftex]{graphicx}
\usepackage[cmex10]{amsmath}
\usepackage{amsfonts}
\usepackage{amssymb}
\usepackage{amsxtra}
\usepackage{latexsym}
\usepackage{subfigure}
\usepackage{cite}
\usepackage{xcolor, soul}
\usepackage{url}
\usepackage{enumitem}
\usepackage{bbm}
\usepackage{multicol, lipsum}
\usepackage{amsthm}
\usepackage{multirow}
\usepackage[linesnumbered,ruled,vlined]{algorithm2e}
\usepackage{stfloats}
\usepackage{kotex}
\usepackage{mathtools, stmaryrd}
\usepackage{xparse} \DeclarePairedDelimiterX{\Iintv}[1]{\llbracket}{\rrbracket}{\iintvargs{#1}}
\NewDocumentCommand{\iintvargs}{>{\SplitArgument{1}{,}}m}
{\iintvargsaux#1} %
\NewDocumentCommand{\iintvargsaux}{mm} {#1\mkern1.5mu..\mkern1.5mu#2}

\SetCommentSty{mycommfont}

\SetKwInput{KwInput}{Input}                
\SetKwInput{KwOutput}{Output}              
\SetKwInput{KwInit}{Initialization}              
\DeclareMathOperator*{\argmin}{argmin}
\DeclareMathOperator*{\argmax}{argmax}
\DeclareMathOperator{\sign}{sign}
\DeclareMathOperator{\rank}{rank}

\renewcommand{\thefootnote}{\fnsymbol{footnote}}
\theoremstyle{definition}
\newtheorem{definition}{Definition}
\theoremstyle{lemma}
\newtheorem{lemma}{Lemma}
\theoremstyle{theorem}
\newtheorem{theorem}{Theorem}
\theoremstyle{corollary}
\newtheorem{corollary}{Corollary}

\ifCLASSOPTIONonecolumn
  \renewcommand{\baselinestretch}{1.57}
  \newcommand{\figwidth}{0.6\columnwidth}

\else
  \renewcommand{\baselinestretch}{0.99}
  \newcommand{\figwidth}{.94\columnwidth}

\fi

\begin{document}
\title{Design of Polar Codes with Puncturing and Extending}


\author{%
 \IEEEauthorblockN{Seokju Han~\IEEEmembership{Member,~IEEE}, Inayat~Ali~\IEEEmembership{Member,~IEEE}, Bonghoe~Kim, and Jeongseok Ha~\IEEEmembership{Senior Member,~IEEE}}\\%
}

\maketitle

\begin{abstract}
This work proposes a novel polar code design scheme using puncturing and extending aiming to improve the successive cancellation list decoding (SCLD) performance. The proposed code design is conducted with Reed-Muller (RM) codes called \emph{base} codes. For a base code, puncturing is first performed to gain a degree of freedom, called \emph{transmission holes}. To this end, we develop a puncturing scheme minimizing the loss in the minimum distance of the base code. Then, we propose a novel extending scheme to produce additional bits which are transmitted through the transmission holes without rate loss. The extending scheme repeats some coded bits and/or intermediate bits induced during the encoding of base code. The repeated bits are decided in such a way to maximize the SCLD performance by utilizing reinforcement learning (RL). Moreover, we introduce a novel idea to reduce the learning space of RL, which greatly expedites the design of extended bits. While the proposed design scheme assumes RM codes as base codes, we show that it can be applied to various types of polar code. Simulation results show that the polar codes designed with the proposed scheme have notably improved SCLD performance compared to existing polar code designs.
\end{abstract}

\begin{IEEEkeywords}
polar codes, successive cancellation list decoder, reinforcement learning, puncturing, extending.
\end{IEEEkeywords}
\ifCLASSOPTIONonecolumn
  \clearpage
  \pagenumbering{arabic}
\fi

\section{Introduction}\label{Sec:Intro}
\IEEEPARstart{P}{olar} codes are shown to achieve the channel capacity of any symmetric binary-input memoryless channel with a successive cancellation (SC) decoder \cite{arikan2009channel}. By using the polarization kernel, the channel combining and splitting operations in the polar codes transform $N$-physical channels into $N$-synthesized channels that are either highly reliable or noisy \cite{arikan2009channel}. For the highly reliable channel indices, called \emph{information set} $\mathcal I$, the user message bits, also called \emph{information bits}, are transmitted over it, whereas, the unreliable channel indices, called \emph{frozen set} $\mathcal F$, are set to known bit values, called \emph{frozen bits} (zeros in this work), at both transmitter and receiver. As the code length increases, the reliabilities go to infinity for bit channel indices in $\mathcal I$ and become completely noisy for channel indices in $\mathcal F$.


The conventional polar codes were designed based on the selection of information set $\mathcal I$ to minimize the error rate performance of SC decoding (SCD) for target code rates \cite{arikan2009channel, mori2009performance, trifonov12efficient, He2017beta}. In \cite{arikan2009channel}, the information set $\mathcal I$ is selected based on the Bhattacharyya parameter. Whereas, the density evolution (DE) technique is utilized in \cite{mori2009performance} to estimate the SCD performance and optimize the construction of the polar codes. To reduce the computational complexity of DE, a Gaussian approximation-based construction scheme is proposed in \cite{trifonov12efficient}, and the designed codes are shown to have comparable performance with the code designed with DE \cite{mori2009performance}. In \cite{schurch2016partial}, the existence of a universal partial order of bit-channel reliabilities was shown for the polar codes. Based on the universal partial order, a method to construct polar codes with reduced design complexity is proposed in \cite{He2017beta}. The polar codes constructed by these existing methods \cite{arikan2009channel, mori2009performance, trifonov12efficient, He2017beta} show good performance with SCD in the long code length regime. However, the error-rate performance degrades for the polar codes optimized for the SCD performance when the code length is moderate to short.
In \cite{tal2015list}, the SC-list (SCL) decoder is proposed which generates a list of decoding candidates during the SCD process to overcome the performance degradation in SCD. It is also shown in \cite{tal2015list} that when the list size $L$ is sufficiently large, SCL decoding (SCLD) performance approaches the maximum likelihood (ML) decoding performance.
Thus, the performance of SCLD is closely related to the minimum distance of polar codes. However, when polar codes optimized for the SCD performance are decoded with an SCL decoder, they have limited performance due to their poor minimum distance characteristics. It is shown in \cite{arikan2009channel} that polar codes designed by selecting information bits that maximize the minimum distance turn out to be Reed-Muller (RM) codes. By utilizing the information bit decision rule in RM codes, a hybrid construction method called RM-polar code is proposed in \cite{li2014rm}.

There have been notable efforts \cite{wang2016parity, trifonov2016polar, trifonov2017randomized, arikan2019sequential} to improve the SCLD performance of polar codes by concatenation with other types of channel codes. In \cite{wang2016parity}, a heuristic approach for designing parity-check concatenated polar codes is proposed by concatenating a polar code with an outer parity-check code. The polar subcodes of extended Bose-Chaudhuri-Hocquenghem (eBCH) codes and the concept of dynamic frozen bits are proposed in \cite{trifonov2016polar}. Due to the large minimum distance characteristics of eBCH codes, the polar subcodes in \cite{trifonov2016polar} improve the SCLD performance with a large $L$. In \cite{trifonov2017randomized}, the randomized construction for polar subcodes is proposed by reducing the number of low-weight codewords to improve the SCLD performance. Meanwhile,  the polarization-adjusted convolutional (PAC) codes are proposed by Arikan in \cite{arikan2019sequential}. In a PAC code, a rate-1 convolutional code is concatenated with a polar code.  It is shown in \cite{arikan2019sequential} that  PAC codes achieve the dispersion bound when decoded with the Fano decoder \cite{fano1963heuristic}. However, the Fano decoder has high and unpredictable decoding latency as compared to the list decoding of these codes \cite{rowshan2021polarization}. The existing construction schemes \cite{wang2016parity, trifonov2016polar, trifonov2017randomized, arikan2019sequential} for polar codes focus on designing the structure of message bits (i.e., information and frozen bits). In this work, we propose a novel construction scheme for polar codes by using the puncturing and extending techniques \cite{wicker1994error} aiming to improve the SCLD performance.

 It has been shown that by carefully designing the puncturing/extending patterns, the designed codes become more flexible/stronger while retaining or improving the property of the base code \cite{grassl2004new, hsu2008capacity, han2011finding}. In \cite{li2019optimal, zhao2021novel, Han2022rate}, the design criteria for puncturing patterns in polar codes are proposed to minimize the degradation of the reliability in the information bits.  In particular, a puncturing scheme called worst quality puncturing (WQP) is proposed in \cite{li2019optimal}, where puncturing patterns are designed to minimize the bit-channel reliability loss in the message bits due to the zero reliability channels associated with the punctured bits. In \cite{zhao2021novel}, the information set approximation puncturing (ISAP) algorithm is proposed to reduce the influence of punctured bits on the information set $\mathcal I$ by introducing guard bits. Meanwhile, a metric called reliability score is defined in \cite{Han2022rate} to measure the reliability of information bits. Then, puncturing patterns are designed to maximize the minimum reliability score of the information bits. Meanwhile, there have been studies \cite{chen2013hybrid, saber2015incremental, zhao2018adaptive} on extended polar codes in which the extending patterns are designed to improve the SCD performance. In \cite{chen2013hybrid}, the extended polar codes are designed by repeating the information bits with weak reliability in SCD. In \cite{saber2015incremental}, extended bits are selected among the nodes in the polar code graph (PCG). The decoding performance of extended polar codes is improved in \cite{zhao2018adaptive}, by expanding the size of the polarizing matrix in which extended bits are inserted.

In the proposed design scheme, a polar code having a large minimum distance \cite{arikan2009channel}, i.e., RM code, is first set as a base code. Then, the coded bits in the base code are punctured to obtain a degree of freedom in transmission called \emph{transmission holes}. The puncturing patterns are designed in such a way that the loss in the minimum distance due to puncturing is minimized. To this end, we derive some minimum distance properties of punctured RM codes, which allows us to develop an efficient algorithm for designing puncturing patterns. The transmission holes are filled with additional parity bits generated by extending the punctured RM code. In this way, the designed code has the same length and rate as those of the base code. In the proposed code design scheme, the extending patterns are designed to maximize the SCLD performance. However, to the best of our knowledge, no efficient mathematical tool exists to analyze and maximize the SCLD performance for a practical list size (i.e., $4 \le L \le 16$). To achieve the design goal, the reinforcement learning (RL) technique is utilized in this work, which has achieved many successes in the area of channel coding, e.g., code design \cite{huang2019ai, liao2021construction} and decoding algorithm \cite{carpi2019reinforcement, doan2020decoding}. In particular, an RL-based constructor-evaluator framework is proposed in \cite{huang2019ai} for designing several types of codes, e.g., linear block codes and polar codes. In this framework, the constructor learns to construct codes with improved error-rate performance which is then evaluated by the evaluator. In \cite{liao2021construction}, the problem of selecting the information set $\mathcal I$ of polar codes is modeled with a maze-traversing game and solved using RL techniques. The designed codes with RL techniques \cite{huang2019ai, liao2021construction} show improved performance compared to the codes designed with the existing constructions without resorting to RL techniques.

To apply RL for designing extending patterns, we first formulate the design of extending patterns for polar codes as a finite Markov decision process (FMDP). The proposed code design scheme selects some nodes in PCG, called \emph{extended nodes}, and then determines how many times the bit values at the extended nodes are repeated. Thus, the extending pattern is defined as the numbers of repetition of all nodes in PCG. Since the bit values at all nodes in PCG are produced during the encoding, it does not require extra complexity to make the extended bits. In the meantime, SCLD can be performed by simply initializing the reliabilities at the extended nodes with the channel outputs, which greatly simplifies SCLD of the extended polar codes.

In the RL-based design of extending patterns, the extending pattern and the indices of new nodes to be extended are defined as the state and the action in RL, respectively. In particular, this work implements the design by utilizing the deep-$Q$ learning (DQL) technique \cite{mnih2015human}. In doing so, it will be observed that the sizes of state and action spaces in DQL grow prohibitively fast, which requires intolerably expensive complexity and long design time. To turn around the technical challenge, we smartly introduce a multi-stage learning and state reduction scheme which not only greatly expedite the design process but also significantly improve the performance of the designed polar codes. Finally, it will be demonstrated that the combination of proposed puncturing and extending schemes leads to a novel polar code design scheme, which allows notable improvement of the SCLD performance as compared to the existing code construction schemes.

The main contributions in this paper are summarized as follows:
\begin{itemize}
 \item We propose a novel polar code design scheme using puncturing and extending. It is demonstrated that the designed polar codes have considerable improvement in SCLD performance over existing polar codes including the 5G NR and modified polar codes (e.g., PAC codes).

 \item We derive some minimum distance properties of punctured RM codes and propose a puncturing scheme for RM codes. The proposed puncturing scheme minimizes the decrease of the minimum distance caused by puncturing. In addition, we discuss how to apply the puncturing scheme to other types of polar codes.

  \item An RL-based extending scheme is proposed aiming to maximize the SCLD performance, and we introduce a novel idea to significantly reduce the sizes of state and action spaces based on the analysis of log-likelihood ratio (LLR) message propagation in SCD. 

\end{itemize}

The remainder of the paper is organized as follows. In Section \ref{Sec:Pre}, the basic concepts of the polar codes are summarized. In Section \ref{Sec:Prop}, we introduce the proposed design scheme based on the puncturing and extending of polar codes. In addition, the proposed puncturing scheme for RM and polar codes is described in detail. In Section \ref{Sec:Ext}, we formulate the design scheme for extending patterns of polar codes using RL. To reduce the training complexity of RL, we introduce a multi-stage learning and a novel idea to reduce the state and action spaces based on the analysis of SCD. In Section \ref{Sec:Sim}, the error-rate performance of proposed polar codes is evaluated and compared with polar codes designed with the existing design schemes. Finally, we conclude this work in Section \ref{Sec:Con}.

\renewcommand{\thefootnote}{\arabic{footnote}}

\section{Polar Codes Basics}\label{Sec:Pre}
\subsection{Polar Codes} \label{Sec:Polar}
An $(N, K)$ polar code with codeword length $N = 2^m$ and message length of $K$ is constructed with a generator matrix $G = F^{\otimes m}$, where $F$ is a polarization kernel\footnote[1]{As shown in \cite{korada2010polar}, various types of polarization kernels can be used to design polar codes. In general, the binary kernel matrix, i.e. $F = \left[\begin{smallmatrix} 1 & 0 \\ 1 & 1 \end{smallmatrix}\right]$, is widely used.} and $\otimes^m$ denotes the $m$-fold Kronecker product of the matrix. A polar code is defined with two sets of bit indices: 1) information set $\mathcal I \subset \{0, 1, \ldots, N-1\}$ consisting of the indices of information bits, and 2) the frozen set $\mathcal F = \mathcal I^c$ consisting of the indices of frozen or parity bits. A codeword $\mathbf c$ of a polar code can be generated as $\mathbf c = \mathbf u \cdot G$, where a binary message vector $\mathbf u = \{ u_0, u_1, \ldots, u_{N-1}\}$ is partitioned into the information and frozen bits denoted by $u_{\mathcal I}$ and $u_{\mathcal F}$, respectively. The structure of the polar code can be represented with a bipartite graph, and we name it \textit{polar code graph} (PCG). In Fig. \ref{Fig:PG}, the PCG of $(8, 5)$ polar code is shown with the information set $\mathcal I = \{3, 4, 5, 6, 7\}$ and the frozen set $\mathcal F = \{0,1,2\}$, which are represented by the filled and shaded circles at stage 0, respectively. Each node in PCG is designated by a pair of indices $(i, j)$, where $i$ and $j$ indicate the message and stage indices, respectively. The set of all nodes in PCG is denoted by $\mathcal N = \{(i, j): 0 \le i \le N - 1, 0 \le j \le m\}$.

There exist nodes in PCG with infinite LLR values due to the frozen bits, and the nodes are referred to as \emph{infinite nodes}. The set of infinite nodes, called \emph{infinity set}, is denoted by $\mathcal N_\infty$. The infinity set at stage 0, denoted by $\mathcal N_{\infty, 0}$, is given by
\[
  \mathcal N_{\infty, 0} = \{(i, 0) | i \in \mathcal F\}.
\]
Then, the infinity set at stage $j > 0$, denoted by $\mathcal N_{\infty, j}$, can be recursively obtained from $\mathcal N_{\infty, j - 1}$ as
\begin{align}
  \mathcal N_{\infty, j}  = \Bigl \{(i, j) | & b_{i, j - 1} = 0, \nonumber \\
  &  \{(i + 2^{j - 1}, j - 1), (i, j - 1) \} \subseteq \mathcal N_{\infty, j - 1} \Bigr \} \nonumber \\
   \cup  \Bigl \{(i, j) | & b_{i, j - 1} = 1, (i, j - 1) \in \mathcal N_{\infty, j - 1} \Bigr \}, \label{Eq:InfNode}
\end{align}
where $\mathbf b_i = \{b_{i, 0}, b_{i, 1}, \ldots, b_{i, m - 1}\}$ is the binary representation vector of the index $i$, i.e., $i = \sum_{j = 0}^{m - 1} b_{i, j} 2^{j}$ for $b_{i, j} \in \{0, 1\}$. Using the definition of $\mathcal N_{\infty, j}$ in \eqref{Eq:InfNode}, the infinity set of a polar code is given by
\begin{equation} \label{Eq:InfSet}
  \mathcal N_\infty = \bigcup_{j = 0}^m \mathcal N_{\infty, j}.
\end{equation}
In Fig. \ref{Fig:PG}, the shaded circles indicate the infinite nodes, and $\mathcal N_\infty = \{(0, 0), (1, 0), (2, 0), (0, 1), (1, 1)\} $.

When some coded bits are punctured or erased (this work considers only the punctured bits), the channel outputs of punctured bits have zero reliability which is propagated to the intermediate nodes in PCG through $\boxplus$ and $+$ operations during SCD/SCLD. The nodes with zero reliability in PCG are called \emph{zero nodes}, and {these nodes degrade} the decoding performance. The set of nodes with zero reliability is referred to as \emph{zero set} and denoted by $\mathcal N_z$. The set of zero reliability nodes at stage $m$, denoted by $\mathcal N_{z, m} \subseteq \mathcal N_z$, is given by 
\[
 \mathcal N_{z, m} = \{(i, m) | i \in \mathcal P\}.
\]
At stage $j < m$, the nodes of zero reliability can be recursively determined with the initial set $\mathcal N_{z, m}$. That is, node $(i, j)$ for $b_{i, j} = 0$ will have zero reliability due to the $\boxplus$ operation when either node $(i, j)$ or $(i + 2^j, j)$ has zero reliability. For example, node (0, 2) in Fig. \ref{Fig:PG} has zero reliability when either (0, 3) or (4, 3) has zero reliability. Meanwhile, for $b_{i, j} = 1$, node $(i, j)$ will have zero reliability when both nodes $(i, j)$ and $(i - 2^j, j)$ have zero reliability. For example, when both nodes (0, 3) or (4, 3) have zero reliability, node (4, 2) ends up  with zero reliability. The recursion from stage $j + 1$ to stage $j$ can be summarized as follows:
\begin{equation}
    \mathcal N_{z, j} = \mathcal N'_{z, j} \setminus \mathcal N_\infty, \label{Eq:ZSet}
\end{equation}
where
\begin{align*}
  \mathcal N'_{z, j} = &\left \{(i, j) | b_{i, j} = 1, \{(i - 2^j, j + 1), (i, j + 1)\} \subseteq \mathcal N_{z, j + 1} \right \} \\
   &\cup \left \{(i, j) | b_{i, j} = 0, (i - 2^j, j + 1) \in \mathcal N_{z, j + 1} \right \} \\
   &\cup \left \{(i, j) | b_{i, j} = 0, (i, j + 1) \in \mathcal N_{z, j + 1} \right \}.
\end{align*}
Note that in \eqref{Eq:ZSet}, the nodes in the infinity set must be excluded from the zero set. Using the zero set at stage $j$ in \eqref{Eq:ZSet}, the set of zero reliability nodes is given by
\[
  \mathcal N_z = \bigcup_{j = 0}^m \mathcal N_{z, j}.
\]
For each information bit in PCG, a tree, referred to as an \emph{information tree}, can be established in such a way that the tree has the information bit as its root and the nodes in the path from the root with increasing stage indices as the descendants. In Fig. \ref{Fig:PG}, the information tree for the information bit $u_4$ is depicted by the shaded area.

\begin{figure}[t]
 \centerline{\includegraphics[width=0.9\columnwidth]{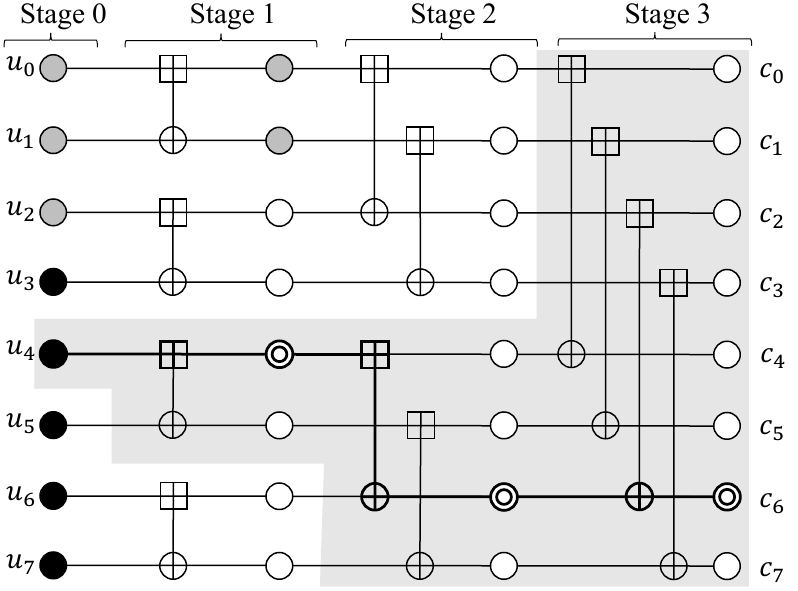}}
	\caption{Polar code graph (PCG) of an (8,5) polar code where nodes are depicted by circles and double circles. The filled and shaded circles at stage 0 indicate the nodes for the information and frozen bits, respectively.} \label{Fig:PG}
\end{figure}

After polar encoding, the codeword $\mathbf c = (c_0, c_1, \ldots, c_{N - 1})$ is modulated into $\mathbf x = (x_0, x_1, \ldots, x_{N - 1})$ and transmitted through a binary-input discrete memoryless channel. The receiver receives the channel output $\mathbf y = (y_0, y_1, \ldots, y_{N-1})$ and performs the decoding. In SCD of polar codes \cite{arikan2009channel}, the information bits are estimated based on the sequential message passing over PCG. At decoding step $i$, the LLR of information bit $i$, denoted by $\Lambda_i$, is obtained using the received channel output $\mathbf y$ and the estimates of the past decisions $\hat {\mathbf u}^{i-1}_0$ as
\begin{equation} \label{Eq:L_i}
    \Lambda_i = \log \left( \frac{\Pr(u_i = 1 | \hat {\mathbf u}^{i-1}_0, \mathbf y)}{\Pr(u_i = 0 | \hat {\mathbf u}^{i-1}_0, \mathbf y)} \right)
\end{equation}
where $\hat{\mathbf u}^{i-1}_0 = \{ \hat{u}_0, \hat{u}_1, \ldots, \hat{u}_{i-1} \}$, and $\hat{u}_j$ is the estimate of the $j$-th message bit for $0 \le j \le i - 1$. The bit decision on the LLR of the information bit is taken as
\begin{equation} \label{Eq:u_i}
\hat u_i =
\begin{cases}
	1   &\text{if  }  \Lambda_i >0, \\
	0   &\text{otherwise}.
\end{cases}
\end{equation}
Meanwhile, for a frozen bit (i.e., $i \in \mathcal F$), a predetermined value is set as the message value and is known to both the transmitter and receiver. The SCLD in \cite{tal2015list} improves the error-correction capability of SCD by tracking $L$ parallel paths during decoding. Note that the performance of SCLD approaches the ML performance when the list size $L$ is sufficiently large \cite{tal2015list}.

\subsection{Reed Mullar Codes}
An RM$(r, m)$ code is a linear block code defined by length $N=2^m$, dimension $K = \sum_{k = 0}^r {m \choose k}$, and degree $0 \le r \le m$. An RM code can be interpreted as a polar code with the information set $\mathcal I$ in such a way that for $i \in \mathcal I$ and $j \in \mathcal F$,
\[
  w_r(i) > w_r(j),
\]
where $w_r(i)$ is the Hamming weight of the $i$-th row in $G$. Hereafter, the Hamming weight of rows in $G$ will be simply called row weight for short. It can be noticed that the indices in $\mathcal I$ are determined in the decreasing order of row weights \cite{arikan2009channel, li2014rm}. Note that the minimum distance of the polar codes is obtained by finding the minimum row weight of $G$ among the rows corresponding to the information bit indices as follows \cite{korada2009polar}:
\[
  d_{\min} = \min \{ 2^{w_b (i)} | i \in \mathcal I \} = \min \{ w_r (i) | i \in \mathcal I \},
\]
where $w_b(i)$ is the Hamming weight of the binary representation of index $i$, i.e., $w_b(i) = \sum_{j = 0}^{m-1} b_{i, j}$. Since the information bit indices of RM codes correspond to the rows with the largest weights, RM codes have the largest minimum distance among polar codes (i.e., $d_{\min}=2^{m-r}$).

\subsection{Puncturing and Extending of Polar Codes}

For a base code with rate $R = K / N$, puncturing removes a set of $N_P$ coded bits, denoted as $\mathcal P = \{p_0, p_1, \ldots, p_{N_P - 1}\}$ to increase the code rate to $R_P = K / (N - N_P)$ where $0 \le p_i \le N - 1$ indicates the index of the $i$-th punctured bit. The punctured bits are not transmitted, and no information is obtained from the channel about these bits. Thus, the LLR values of the punctured bits are set as zero at the decoder. For example, if two bits are punctured with $\mathcal P = \{0, 1\}$ in the PCG shown in Fig. \ref{Fig:PG}, only the coded bits $x_2, x_3, x_4, x_5, x_6,$ and $x_7$ are transmitted through the channel. The channel output for punctured bits is set as zero, i.e., $y_0 = y_1 = 0$. In general, it is shown in\cite{shin2013design} that the channel splitting for the zero-capacity punctured channels results in zero-capacity synthetic channels for some message bits called \emph{incapable} bits. Furthermore, it is noted that the number of incapable bits is equal to the number of punctured bits.

On the contrary, the extending technique produces $N_E$ additional parity bits as a part of an extended codeword, whose code rate gets reduced to $R_E = K/(N + N_E)$. The extending in this work simply repeats bit values at \emph{any} nodes in PCG as the additional parity bits, which is also considered in \cite{saber2015incremental}. The repetitions of nodes are described with the extension matrix defined in Definition \ref{Def:EM}. Note that the bit values of all the nodes in a PCG can be obtained without any additional computations during the encoding process. It should also be noted that $e_{i, 0} > 0$ and $e_{j, m} > 0$ indicate the repetitions of message bit, $u_i$ and coded bit $c_{j}$, respectively. The selected nodes for extending will be called \emph{extended} nodes, and the set of extended nodes is called \emph{extending set} and denoted by $\mathcal E$. 
\begin{definition}[Extension Matrix] \label{Def:EM}
The extension matrix is an $N \times (m + 1)$ matrix given by
\[
  E =
  \begin{bmatrix}
    e_{0, 0} & e_{1, 1} & \cdots & e_{0, m} \\
    e_{1, 0} & e_{1, 1} & \cdots & e_{1, m} \\
     \vdots & \vdots  & \ddots & \vdots  \\
    e_{N - 1,0} & e_{N - 1, 1} & \cdots & e_{N - 1, m}
  \end{bmatrix}
\]
where $e_{i, j} \in \mathbb N_0$ is the number of repetitions for the bit value at node $(i, j)$.
\end{definition}

\begin{figure}[t]
 \centerline{\includegraphics[width=0.99\columnwidth]{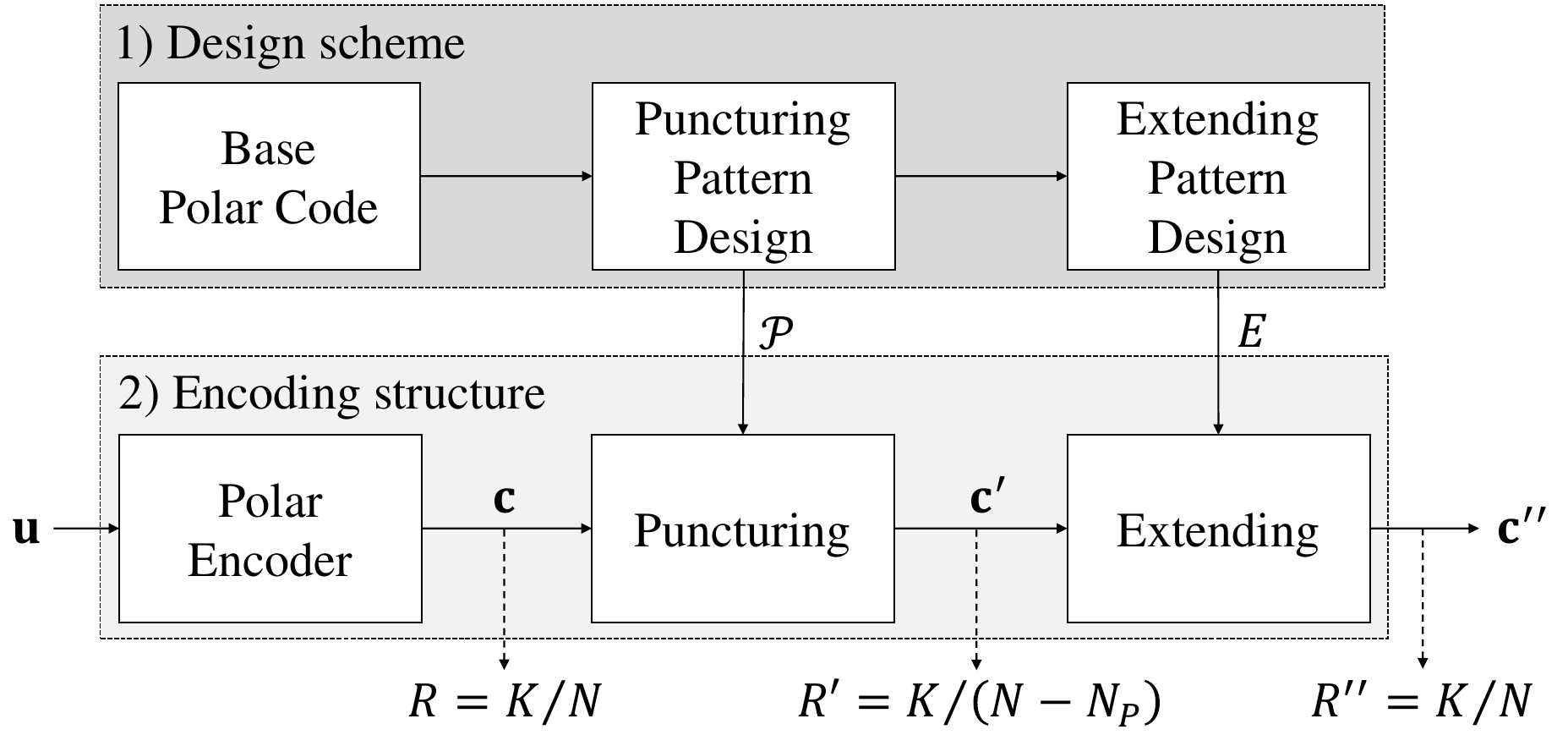}}
	\caption{Block diagram of the proposed design scheme for polar codes using puncturing and extending.} \label{Fig:PDP}
\end{figure}

\section{Polar Code Design based on Puncturing and Extending}\label{Sec:Prop}

\subsection{Design Strategy}\label{SubSec:Design}
The proposed design scheme is illustrated in Fig. \ref{Fig:PDP} where we consider an RM code of length $N$ as the base code. For the given base code, $N_P$ punctured bit indices in the puncturing pattern $\mathcal P$ are determined in such a way to minimize the decrease of minimum distance after puncturing. Then, for the punctured polar codes, we decide $N_E$ extended bits by designing the extension matrix, $E = [e_{i, j}]$ for $N_E = \sum_{i, j} e_{i, j}$. The proposed extending scheme finds the extension matrix, $E$ to maximize the SCLD performance. It is assumed that the numbers of punctured and extended bits are the same, i.e., $N_E = N_P = N_h$, which makes the code rate of the designed code $R''$ in Fig. \ref{Fig:PDP} the same as that of the base code $R$. The number $N_h$ will be referred to as the number of transmission holes.

For the designed puncturing and extending patterns, i.e., $\mathcal P$ and $E$, the encoding is performed in such a way that a message bit vector $\mathbf u$ is first encoded into a codeword of the base code, denoted by $\mathbf c$ in Fig. \ref{Fig:PDP}. Then, $N_h$ parity bits are punctured from $\mathbf c$, which creates $N_h$ transmission holes in $\mathbf c'$, resulting in a degree of freedom in the transmission. After that, the codeword of the proposed design, $\mathbf c''$ is made by adding the extended bits for maximizing the SCLD performance. The details of the proposed puncturing and extending schemes are shortly discussed in Sections \ref{SubSec:Punc} and \ref{Sec:Ext}, respectively.

During the design process, the number of transmission holes $N_h$ plays an important role in the SCLD performance of the designed polar code. If a large number of bits are punctured (i.e., when $N_h$ is large), the minimum distance of the code deteriorates too much from that of the base code. Whereas, for small $N_h$, the degree of freedom in the design is limited and it is hard to obtain sufficient SCLD performance gain by extending. It is noticed that when we reach the limit of the performance improvement through extending, the punctured bits will be recovered by extending since the performance gain by additional extending is less than the gain achieved by recovering the punctured bits. Thus, for designing good puncturing and extending patterns, we gradually increase $N_h$ until the extending recovers the punctured bits.

\subsection{Puncturing Pattern Design} \label{SubSec:Punc}
The minimum distance of a code decreases by puncturing since it removes some of the coded bits and the associated structural connections. In this section, we propose a puncturing scheme tailored for RM codes which provides puncturing patterns minimizing the loss of the minimum distance after puncturing. We denote a puncturing pattern of length $\ell$ by $\mathcal P^{(\ell)} = \{p_0, p_1, \ldots, p_{\ell - 1} \}$, where $\mathcal P^{(0)} = \emptyset$, and $p_i$ is the index of the $i$-th punctured bit. For simplicity, the superscript $\ell$ will be omitted when it is not necessary. Let us denote the minimum distance of a code and the number of codewords with the minimum non-zero weight by $d_{\min}$ and $A_{d_{\min}}$, respectively. Then, the minimum distance and the number of codewords with the minimum weight after puncturing are denoted by $d_{\min} (\mathcal P^{(\ell)})$ and $A_{d_{\min}} (\mathcal P^{(\ell)})$, respectively, when the puncturing is performed with $\mathcal P^{(\ell)}$.

In the proposed scheme, the design problem is formulated as the selection of $\mathcal P^{(\ell)}$ which maximizes the minimum distance and is defined as
\begin{equation} \label{Eq:minNode1}
\mathcal P' = \arg\max_{p \in \mathcal B^{(\ell)}}  d_{\min} (\mathcal P^{(\ell-1)}\cup \{p\}),
\end{equation}
where $\mathcal B^{(\ell)} = \{0, 1, \ldots, N - 1\} \backslash \mathcal P^{(\ell - 1)}$ is the index set of the unpunctured coded bits after puncturing with $\mathcal P^{(\ell - 1)}$. If the cardinality of $\mathcal P'$ is unity, i.e., $|\mathcal P'| = 1$, $\mathcal P'$ contains only the $\ell$-th punctured bit, which simply leads to $\mathcal P^{(\ell)} =\mathcal P^{(\ell - 1)} \cup \mathcal P' $. Otherwise, among the multiple candidates in $\mathcal P'$, $p_{\ell}$ is selected to minimize the number of codewords with the minimum weight as
\begin{equation} \label{Eq:minNode2}
  p_{\ell}=\arg\min_{p \in \mathcal P'}(A_{d_{\min}}  (\mathcal P^{(\ell - 1)} \cup \{p\})).
\end{equation}
The design steps in \eqref{Eq:minNode1} and \eqref{Eq:minNode2} are repeatedly performed until the number of punctured bits $\ell$ reaches $N_p$. To conduct the design steps in an efficient way, we first derive minimum distance properties for RM codes which are carefully harnessed to devise an algorithm for finding puncturing patterns.

An RM$(r, m)$ code can be represented by a set of polynomials \cite{wicker1994error}. Let us denote $P_{r, m}$ as the set of polynomials of $m$ variables $\mathbf x = (x_0, x_1, \ldots, x_{m - 1})$ having degree less than or equal to $r$. In addition, let $f(\mathbf b)$ be the evaluation of the polynomial $f(\mathbf x) \in P_{r,m}$ at $\mathbf b \in \mathbb F_2^m$. Then, for each polynomial $f(\mathbf x) \in P_{r,m}$, there exists a unique codeword $\mathbf c = (c_0, c_1, \ldots, c_{n - 1})$ in the RM$(r, m)$ code such that each coded bit $c_i$ of $\mathbf c$ is given by the evaluation of $f(\mathbf x)$ at $\mathbf x = \mathbf b_i$, where $\mathbf b_i = (b_{i, 0}, b_{i, 1}, \ldots, b_{i, m - 1})$ is the binary representation vector of the index $i$, i.e., $i = \sum_{j = 0}^{m - 1} b_{i, j} \; 2^{j}$ for $b_{i, j} \in \{0, 1\}$. Let $\mathcal C_{\min}$ be the set of codewords with the minimum weight, i.e., $\mathcal C_{\min} = \{\mathbf c |  w_H(\mathbf c) = d_{\min}, \mathbf c \in {\rm RM}(r, m) \}$. Then, a coded bit $c_i$ for $\mathbf c \in \mathcal C_{\min}$, is obtained by evaluating $f(\mathbf x) \in P_{r,m}$ satisfying the following equation \cite{macwilliams1977theory}:
\begin{equation} \label{Eq:minf}
  f(\mathbf x) = \prod_{k=0}^{r - 1} \left(\xi_{k} + \sum^{m - 1}_{v=0} z_{k, v} x_v \right),
\end{equation}
where
\[
  Z = \left[\begin{smallmatrix} z_{0, 0}  & \cdots & z_{0, m - 1}  \\ \vdots & \ddots & \vdots \\ z_{r - 1, 0} & \cdots & z_{r - 1, m - 1} \end{smallmatrix}\right] \in \Psi_{r,m},
\]
and $\Psi_{r, m}$ is a set of $r \times m$ binary matrices with rank $r$ and $\boldsymbol \xi = \{\xi_0, \ldots, \xi_{r - 1}\} \in \mathbb F^r_2$. That is, a coded bit $c_i$ of $\mathbf c \in \mathcal C_{\min}$, is given by
\[
  c_{i} = f(\mathbf b_i) = \prod_{k = 0}^{r - 1} \left(\xi_{k} + \sum^{m - 1}_{v = 0} z_{k, v} b_{i, v} \right).
\]

The minimum distance of punctured code depends on the weight distribution of its base code and the puncturing pattern. It will be shown in Theorem \ref{Thm:mw} that the minimum weight codewords of punctured code are obtained by deleting coded bits in $\mathbf c \in \mathcal C_{\min}$ if the number of punctured bits $\ell$ is less than or equal to the minimum distance of the base RM code. That is, some codewords only in $\mathcal C_{\min}$ will be the codewords of punctured code with the minimum weight $d_{\min}(\mathcal P^{(\ell)})$, regardless of the puncturing pattern $\mathcal P^{(\ell)}$ if the number of punctured bits, $\ell$ is less than or equal to $d_{\min}$.

\begin{lemma} \label{Lm:O}
  For any coded bit index $i \in \{0, 1, \ldots N - 1\}$, there exists at least one minimum weight codeword $\mathbf c \in \mathcal C_{\min}$ such that $c_i$ is one, i.e., $c_i = 1$.
\end{lemma}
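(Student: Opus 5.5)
Given an index $i$ with binary vector $\mathbf b_i$, we exhibit a polynomial of the form \eqref{Eq:minf} that evaluates to one at $\mathbf b_i$. By the characterization quoted from \cite{macwilliams1977theory}, any product of $r$ affine forms whose linear parts are linearly independent, i.e., any $Z \in \Psi_{r,m}$ and any $\boldsymbol \xi \in \mathbb F_2^r$, yields a codeword in $\mathcal C_{\min}$. Such a codeword is the indicator of an $(m-r)$-dimensional affine subspace of $\mathbb F_2^m$, so its weight is $2^{m-r} = d_{\min}$. The degrees of freedom in $\boldsymbol \xi$ will be used to force the evaluation at $\mathbf b_i$ to equal one.

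First, we fix $Z$ to be any rank-$r$ matrix, for instance the first $r$ rows of the $m \times m$ identity, which is possible since $r \le m$. (The case $r = 0$ is trivial: the only nonzero codeword is all-ones.) Next, for each $k$ we set
\[
  \xi_k = 1 + \sum_{v=0}^{m-1} z_{k,v} b_{i,v},
\]
computed over $\mathbb F_2$. Then each factor of \eqref{Eq:minf} evaluated at $\mathbf b_i$ equals $1$, so $c_i = f(\mathbf b_i) = 1$. With the identity choice this is simply $f(\mathbf x) = \prod_{k=0}^{r-1} (x_k + b_{i,k} + 1)$, the indicator of the set $\{\mathbf b : b_k = b_{i,k},\ 0 \le k \le r-1\}$, which contains $\mathbf b_i$ and has $2^{m-r}$ elements.

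The only point needing care is confirming that $f$ indeed gives a minimum-weight codeword rather than merely a low-degree polynomial. This holds because $\deg f \le r$, so the codeword lies in RM$(r,m)$, and because the rank-$r$ condition makes the common solution set of the $r$ independent affine equations an affine subspace of size exactly $2^{m-r}$. There is therefore no real obstacle. Alternatively, we could use the transitivity of the affine group on RM codes: translating one fixed minimum-weight codeword by $\mathbf x \mapsto \mathbf x + \mathbf a$ moves any of its support points onto $\mathbf b_i$. The explicit construction above is more direct.
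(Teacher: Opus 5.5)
Your proof is correct and is essentially the paper's argument: the paper takes exactly $f(\mathbf x) = \prod_{k=0}^{r-1}(1 + b_{i,k} + x_k)$, i.e., $Z = [I_r \;\; \mathbf 0_{r \times (m-r)}]$ and $\xi_k = 1 + b_{i,k}$, and concludes $\mathbf c \in \mathcal C_{\min}$ from $\rank(Z) = r$. Your added check that the support is an $(m-r)$-dimensional affine subspace of size $2^{m-r}$ makes explicit the converse direction of the product-of-affine-forms characterization, which the paper uses without comment.
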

\begin{proof}
Note that each polynomial in $P_{r, m}$ uniquely corresponds to a codeword in an RM$(r, m)$ code. Then, for a polynomial
\begin{equation} \label{Eq:fx}
  f(\mathbf x) = \prod_{k=0}^{r - 1} (1 + b_{i, k} + x_{k}),
\end{equation}
there exists the corresponding codeword, denoted by $\mathbf c$ for which the $i$-th coded bit $c_i$ is given by $f(\mathbf b_i) = 1$.

Meanwhile, the polynomial $f(\mathbf x)$ can be understood as the one in \eqref{Eq:minf} with $\xi_k = 1 + b_{i, k}$ and 
\[
  Z =
  \begin{bmatrix}
     I_r  & \mathbf 0_{r \times (m - r)}
  \end{bmatrix}
\]
where $I_r$ and $\mathbf 0_{r \times (m - r)}$ are the $r \times r$ identity matrix and the $r \times (m - r)$ all zero matrix, respectively. Note that $Z$ has its rank of $r$, which tells that the codeword $\mathbf c$ is in $\mathcal C_{\min}$. Thus, the lemma follows.
\end{proof}

\begin{theorem} \label{Thm:mw}
  The minimum weight codewords of a punctured RM$(r, m)$ code are obtained by deleting coded bits of codewords only in $\mathcal C_{\min}$ if the number of punctured bits $\ell \le d_{\min} = 2^{m-r}$.
\end{theorem}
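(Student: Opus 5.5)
The plan is to compare, for a fixed puncturing pattern $\mathcal P=\mathcal P^{(\ell)}$ with $1\le \ell\le d_{\min}=2^{m-r}$, the punctured weight $w_H(\mathbf c)-|\mathrm{supp}(\mathbf c)\cap\mathcal P|$ of an arbitrary nonzero codeword $\mathbf c$ with the best punctured weight attainable inside $\mathcal C_{\min}$. First I would fix the reference value. By \eqref{Eq:minf}, the supports of the codewords in $\mathcal C_{\min}$ are exactly the $(m-r)$-dimensional affine flats of $\mathbb F_2^m$. For such a flat $A$ the punctured weight is $d_{\min}-|A\cap\mathcal P|$. Let $t=\max_A |A\cap\mathcal P|$. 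By Lemma~\ref{Lm:O} we have $t\ge 1$, so the best punctured weight over $\mathcal C_{\min}$ is $d_{\min}-t\le d_{\min}-1$. The theorem then reduces to showing that every $\mathbf c\notin\mathcal C_{\min}$ satisfies $w_H(\mathbf c)-|\mathrm{supp}(\mathbf c)\cap\mathcal P|\ge d_{\min}-t$. I read the statement as allowing ties, i.e., codewords in $\mathcal C_{\min}$ always realize $d_{\min}(\mathcal P)$.

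Second, I would dispose of heavy codewords using the RM weight structure. By the Kasami--Tokura characterization, any nonzero weight of RM$(r,m)$ other than $d_{\min}$ is at least $2^{m-r}+2^{m-r-1}$. In particular, codewords with $w_H(\mathbf c)\ge d_{\min}+\ell$ are handled trivially, because their punctured weight is at least $d_{\min}\ge d_{\min}-t$. The remaining case is $d_{\min}<w_H(\mathbf c)<d_{\min}+\ell\le 2d_{\min}$, where $|\mathrm{supp}(\mathbf c)\cap\mathcal P|$ could exceed $t$.

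Third, for this middle range I would use the structure of codewords of weight below $2d_{\min}$. Such a codeword corresponds to a polynomial that, after an affine change of variables, has a known form. I would use this form to exhibit a partition of $\mathrm{supp}(\mathbf c)$ relative to $(m-r)$-flats. Concretely, I would take an $(m-r)$-dimensional direction $V$ and restrict $f$ to the cosets $\mathbf a+V$. Each restriction is either zero or of even weight, and the restriction to a coset on which $f$ is nonconstant is a codeword of a smaller RM code. The goal is to show that $\mathrm{supp}(\mathbf c)\cap\mathcal P$ is covered by a flat $A$ with $|A\cap \mathcal P|\ge |\mathrm{supp}(\mathbf c)\cap\mathcal P|-(w_H(\mathbf c)-d_{\min})$. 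Such a flat gives exactly the needed inequality $w_H(\mathbf c)-|\mathrm{supp}(\mathbf c)\cap\mathcal P|\ge d_{\min}-|A\cap\mathcal P|\ge d_{\min}-t$. To find it, I would try an averaging argument first: average $|A\cap\mathcal P|$ over flats $A$ contained in, or mostly overlapping, $\mathrm{supp}(\mathbf c)\cup(\text{small complement})$.

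The main obstacle is this third step. A codeword of weight $\tfrac32 d_{\min}$ could a priori absorb more punctured positions than any single flat, so the covering/averaging bound is exactly where $\ell\le d_{\min}$ must be used. If averaging fails, I would fall back on an explicit case analysis of the two-flat (symmetric-difference) form of weight-$<2d_{\min}$ codewords. In that form $\mathrm{supp}(\mathbf c)=A_1\triangle A_2$, and one of $A_1$ or $A_2$ should capture enough of $\mathcal P$.
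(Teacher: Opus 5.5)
Your first two steps are sound. The supports of $\mathcal C_{\min}$ are the $(m-r)$-flats, Lemma~\ref{Lm:O} gives $t\ge 1$, and codewords with $w_H(\mathbf c)\ge d_{\min}+\ell$ are harmless. Your target in the third step is also exactly the right reduction: a flat $A$ with $|A\cap\mathcal P|\ge \Delta-(w_H(\mathbf c)-d_{\min})$, where $\Delta=|\mathrm{supp}(\mathbf c)\cap\mathcal P|$. But that step is the whole content of the theorem, and the proposal does not supply it.

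Averaging only helps over a family of flats tied to $\mathbf c$; over all $(m-r)$-flats the mean of $|A\cap\mathcal P|$ is merely $\ell 2^{-r}$. Your fallback family rests on a false structural claim: codewords of weight below $2d_{\min}$ need not have support $A_1\triangle A_2$ for two $(m-r)$-flats. By Kasami--Tokura, a codeword of weight $2^{m-r+1}-2^{m-r+1-\eta}$ is affinely equivalent to one of two forms:
\begin{itemize}
\item $x_1\cdots x_{r-\eta}(x_{r-\eta+1}\cdots x_r+x_{r+1}\cdots x_{r+\eta})$, a sum of two minimum-weight codewords; or
\item $x_1\cdots x_{r-2}(x_{r-1}x_r+x_{r+1}x_{r+2}+\cdots+x_{r+2\eta-3}x_{r+2\eta-2})$, a sum of $\eta$ of them.
\end{itemize}
For example, $x_1x_2+x_3x_4+x_5x_6\in\mathrm{RM}(2,6)$ has weight $28<32=2d_{\min}$. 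Yet it is not a sum of two products of affine forms: the alternating form of its quadratic part has rank $6$, while for such a sum the rank is at most $4$.

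The missing idea is exactly the decomposition the paper uses \cite{kasami1970weight}: $\mathbf c=\sum_{j=1}^{\nu}\mathbf c_j$ with $\mathbf c_j\in\mathcal C_{\min}$ and $\nu\le\eta$. Since $\mathrm{supp}(\mathbf c)\subseteq\bigcup_j\mathrm{supp}(\mathbf c_j)$, pigeonhole gives some $\mathbf c_j$ whose support (a flat) contains at least $\Delta/\nu\ge\Delta/\eta$ punctured positions. That flat meets your target because
\[
w_H(\mathbf c)-d_{\min}=2^{m-r}\bigl(1-2^{1-\eta}\bigr)\ge 2^{m-r}\bigl(1-\eta^{-1}\bigr)\ge \Delta\bigl(1-\eta^{-1}\bigr),
\]
using $2^{\eta-1}\ge\eta$ and $\Delta\le\ell\le d_{\min}$. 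Hence $w_H(\mathbf c)-\Delta\ge d_{\min}-\Delta/\eta\ge d_{\min}-t$.

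A two-flat pigeonhole would settle only the $\nu=2$ forms, where $1-2^{1-\eta}\ge\tfrac12$ suffices. The forms with $\nu=\eta\ge 3$ are where $2^{\eta-1}\ge\eta$ is genuinely needed.

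Your tie-allowing reading is also the right one to aim for. The paper's display writes $1-2^{-\eta}$ where $1-2^{1-\eta}$ is correct. So at $\eta=2$ and $\ell=d_{\min}$ its chain yields only $\Delta w_H(\mathbf c,\mathcal P)\ge d_{\min}$, not a strict inequality, i.e., the same non-strict conclusion as above.
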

\begin{proof}
 For a codeword $\mathbf c \in {\rm RM}(r, m)$, let us denote the weight and the decrease of weight after puncturing with $\mathcal P = \{p_0, p_1, \ldots, p_{\ell - 1}\}$ by $w_H(\mathbf c, \mathcal P)$ and $\Delta w_H (\mathbf c, \mathcal P) = w_H(\mathbf c) - w_H(\mathbf c, \mathcal P)$, respectively. According to Lemma \ref{Lm:O}, there exists at least one codeword, $\mathbf c \in \mathcal C_{\min}$ such that $w_H(\mathbf c, \mathcal P) < d_{\min}$ after puncturing. Meanwhile, for a codeword $\mathbf c'$ of weight $w_H(\mathbf c') \ge 2 d_{\min} = 2^{m-r+1}$, $w_H (\mathbf c', \mathcal P) \ge d_{\min}$ since the number of punctured bits $\ell$ is less than or equal to $d_{\min}$. Thus, a codeword of weight $w_H(\mathbf c') \ge 2 d_{\min}$ cannot be a minimum weight codeword of the punctured RM($r, m$) code.

In \cite{berlekamp1969restrictions}, it is shown that for a codeword $\mathbf c'' \in {\rm RM} (r, m)$ of weight $w_H(\mathbf c'') < 2 d_{\min} = 2^{m-r+1}$, its weight can be expressed as
\[
	w_H(\mathbf c'') = 2^{m-r+1} - 2^{m-r+1-\eta}
\]
for $\eta \in \{1, 2, \ldots, m - r + 1\}$. We denote the set of codewords having the same weight by $\mathcal C_{\eta}= \{\mathbf c| w_H(\mathbf c) =  2^{m-r+1} - 2^{m - r + 1 -\eta}\}$. Note that $\mathcal C_{1}= \mathcal C_{\min}$. It is also shown in \cite{kasami1970weight} that a codeword $\mathbf c \in \mathcal C_{\eta}$ can be expressed as a sum of $\nu \le \eta$ minimum weight codewords, i.e., $\mathbf c = \sum_{j = 1}^{\nu} \mathbf c_j $ where $\mathbf c_j \in \mathcal C_{\min}$. Then, the weight reductions of $\mathbf c_j$ and $\mathbf c$ satisfy the following inequality:
\begin{equation} \label{Eq:wh}
   \Delta w_H (\mathbf c, \mathcal P) \le \nu \max_{j \in \{1, 2, \ldots, \nu\}} \Delta w_H (\mathbf c_j, \mathcal P).
\end{equation}

If there exists a codeword $\mathbf c \in \mathcal C_{\eta}$, for $\eta \ge 2$ which becomes a codeword of minimum weight after puncturing, i.e, $w_H(\mathbf c, \mathcal P) = d_{\min}(\mathcal P)$, then the following inequality must be fulfilled: 
\begin{align}
   d_{\min}(\mathcal P)
      &  \le \min_{j \in \{1, \ldots, \nu\}} w_H(\mathbf c_j, \mathcal P) \nonumber \\
      &  = d_{\min} - \max_{j \in \{1, \ldots, \nu\}} \Delta w_H (\mathbf c_j, \mathcal P) \nonumber \\
      &  \le d_{\min} - \frac{\Delta w_H (\mathbf c, \mathcal P)}{\nu} \label{Eq:dminP1} \\
      &  \le d_{\min} - \frac{\Delta w_H (\mathbf c, \mathcal P)}{\eta}, \label{Eq:dminP}
\end{align}
where the inequality in \eqref{Eq:dminP1} is due to \eqref{Eq:wh}. The equality, $d_{\min}(\mathcal P) = w_H(\mathbf c) - \Delta w_H(\mathbf c, \mathcal P)$ and the inequality in \eqref{Eq:dminP} give us the following inequality:
\begin{align}
     \Delta w_H (\mathbf c, \mathcal P)&\ge \frac{w_H(\mathbf c) - d_{\min}}{(1 - \eta^{-1})} \nonumber \\
      & = 2^{m-r} \frac{1- 2^{-\eta}}{1- \eta^{-1}}
       > 2^{m-r} = d_{\min} \label{Eq:Dr}
\end{align}
where the inequality in \eqref{Eq:Dr} is due to $2^{\eta} > \eta$, $\forall \, \eta \in \mathbb N$. Thus, if a minimum weight codeword of punctured RM code comes from  $\mathcal C_\eta$ for $\eta \ge 2$, the inequality in \eqref{Eq:Dr} must be satisfied. However, the number of punctured bits $\ell \le d_{\min}$, and thus, the inequality in \eqref{Eq:Dr} cannot be fulfilled. In summary, it can be concluded that the minimum weight codewords after puncturing are only from $\mathcal C_{\min}$.
\end{proof}

Due to Theorem \ref{Thm:mw}, we can find the minimum distance of a punctured RM code by tracking the changes in the weights of the codewords in $\mathcal C_{\min}$ when the number of punctured bits $\ell$ is less than or equal to $d_{\min}$.

The amount of reduction in the minimum distance of the base code after puncturing with $\mathcal P^{(\ell)} = \{p_0, p_1, \ldots, p_{\ell - 1}\}$ is denoted as $\Delta d_{\min} (\mathcal P^{(\ell)}) = d_{\min} - d_{\min}(\mathcal P^{(\ell)})$. Since the minimum distance of a punctured code, i.e., $d_{\min}(\mathcal P^{(\ell)})$ is determined by the weights of codewords in $\mathcal C_{\min}$ for $\ell \le d_{\min}$, the set of minimum weight codewords is defined as
\begin{equation}
  \mathcal C_{\min} (\mathcal P^{(\ell)}) = \left\{\mathbf c   \in \mathcal C_{\min} \middle | \Delta w_H (\mathbf c, \mathcal P^{(\ell)}) = \Delta d_{\min} (\mathcal P^{(\ell)}) \right \}.
\end{equation}
When a puncturing pattern $\mathcal P^{(\ell - 1)}$ is extended to $\mathcal P^{(\ell)} = \mathcal P^{(\ell - 1)} \cup \{p'\}$ with a new punctured bit index $p' \not \in \mathcal P^{(\ell - 1)}$, the minimum distance of the punctured code with $\mathcal P^{(\ell)}$  is determined as follows:
\begin{multline}
d_{\min}(\mathcal P^{(\ell)})  \\
  =\begin{cases}
    d_{\min}(\mathcal P^{(\ell - 1)}) - 1, & \text{if }\,  \exists\, \mathbf c \in \mathcal C_{\min}(\mathcal P^{(\ell - 1)}) \text{ s.t. } c_{p'} = 1,\\
    d_{\min}(\mathcal P^{(\ell - 1)}),      & \text{otherwise}.
  \end{cases}
\end{multline}
Note that to avoid the decrease of minimum distance due to the additional punctured bit $c_{p'}$, it is necessary to decide the bit index $p'$ in such a way that for all $\mathbf c \in \mathcal C_{\min}(\mathcal P^{(\ell - 1)})$, the bit value at $p'$ is zero, i.e., $c_{p'} = 0$. Theorem \ref{Thm:Punc} tells whether such an index $p'$ exists or not.

\begin{theorem} \label{Thm:Punc}
For a set of indices $\mathcal I = \{ i_0, i_1, \ldots, i_{\ell - 1} \}$ and $\ell \le d_{\min}$, there exists a codeword $\mathbf c \in \mathcal C_{\min}$ for RM($r, m$) code such that the coded bits at all the indices in $\mathcal I$ are ones, i.e., $c_i = 1$, $\forall i \in \mathcal I$, if and only if $\rank(B_{\mathcal I}) \le m - r$, where
\[
  B_{\mathcal I} =
  \begin{bmatrix}
    \mathbf{b}_{i_1}^T - \mathbf{b}_{i_0}^T & \mathbf{b}_{i_2}^T - \mathbf{b}_{i_0}^T & \cdots & \mathbf{b}_{i_{\ell - 1}}^T - \mathbf{b}_{i_0}^T
  \end{bmatrix}.
\]
\end{theorem}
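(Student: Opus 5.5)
The plan is to use the characterization \eqref{Eq:minf} of the minimum weight codewords directly, and to reinterpret it geometrically. A polynomial $f(\mathbf x)=\prod_{k=0}^{r-1}(\xi_k+\sum_v z_{k,v}x_v)$ evaluates to one at $\mathbf b\in\mathbb F_2^m$ if and only if every factor equals one. Equivalently,
\[
  Z\mathbf b^T=(\boldsymbol\xi+\mathbf 1)^T
\]
over $\mathbb F_2$, where $\mathbf 1$ is the all-one vector of length $r$. So the support of a codeword $\mathbf c\in\mathcal C_{\min}$ is the solution set of a consistent linear system with a rank-$r$ coefficient matrix. That set is an affine subspace (a coset of $\ker Z$) of dimension $m-r$, which is consistent with $d_{\min}=2^{m-r}$. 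The theorem then reduces to the claim that the points $\mathbf b_{i_0},\ldots,\mathbf b_{i_{\ell-1}}$ lie in a common coset of $\ker Z$ for some rank-$r$ matrix $Z$ if and only if their difference vectors span a space of dimension at most $m-r$. Over $\mathbb F_2$, subtraction coincides with addition.

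\emph{Necessity.} Suppose $\mathbf c\in\mathcal C_{\min}$ has $c_i=1$ for all $i\in\mathcal I$, and let $(Z,\boldsymbol\xi)$ be its parameters from \eqref{Eq:minf}. Then $Z\mathbf b_{i_k}^T=(\boldsymbol\xi+\mathbf 1)^T$ for every $k$. Subtracting the equation for $k=0$ gives $Z(\mathbf b_{i_k}^T-\mathbf b_{i_0}^T)=\mathbf 0$, so every column of $B_{\mathcal I}$ lies in $\ker Z$. Since $\rank(Z)=r$, the kernel has dimension $m-r$, and hence $\rank(B_{\mathcal I})\le m-r$.

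\emph{Sufficiency.} Suppose $\rank(B_{\mathcal I})\le m-r$, and let $V\subseteq\mathbb F_2^m$ be the column space of $B_{\mathcal I}$. The annihilator $V^\perp=\{\mathbf z:\mathbf z\mathbf v^T=0\ \forall \mathbf v\in V\}$ has dimension $m-\dim V\ge r$. We pick $r$ linearly independent vectors of $V^\perp$ as the rows of $Z$, so $Z\in\Psi_{r,m}$ and $V\subseteq\ker Z$. We then set $\xi_k=1+\sum_v z_{k,v}b_{i_0,v}$, so that every factor of $f$ equals one at $\mathbf b_{i_0}$. For any other $i_k\in\mathcal I$,
\[
  Z\mathbf b_{i_k}^T=Z\mathbf b_{i_0}^T+Z(\mathbf b_{i_k}^T-\mathbf b_{i_0}^T)=Z\mathbf b_{i_0}^T,
\]
so $f(\mathbf b_{i_k})=1$ as well. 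By the correspondence stated before Lemma \ref{Lm:O}, this $f$ of the form \eqref{Eq:minf} with $Z\in\Psi_{r,m}$ yields a codeword $\mathbf c\in\mathcal C_{\min}$ with $c_i=1$ for all $i\in\mathcal I$. This generalizes the construction in the proof of Lemma \ref{Lm:O}, which is the case $\ell=1$.

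The main obstacle is the sufficiency step: we need a full-rank $Z$ whose kernel contains the span of the differences. This is handled by the annihilator dimension count, $\dim V^\perp=m-\dim V$, which is exactly where the hypothesis $\rank(B_{\mathcal I})\le m-r$ enters. The condition $\ell\le d_{\min}$ is not used by either direction. It only reflects that a flat of dimension $m-r$ has $2^{m-r}$ points, so at most that many distinct indices can be covered.
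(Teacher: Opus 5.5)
Your proposal is correct and follows essentially the same route as the paper's proof. For necessity, the columns of $B_{\mathcal I}$ lie in the null space of the rank-$r$ matrix $Z$, which has dimension $m-r$; the paper phrases this as a contradiction via rank--nullity, while you argue directly. For sufficiency, you choose $r$ independent rows of $Z$ from the annihilator of the column space of $B_{\mathcal I}$ and set $\xi_k = 1+\mathbf z_k\mathbf b_{i_0}^T$, which is exactly the paper's construction. Your remark that the hypothesis $\ell\le d_{\min}$ plays no role in the equivalence is also correct.
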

\begin{proof}
Suppose that there exists $\mathbf c \in \mathcal C_{\min}$ such that $c_{i} = 1$, $\forall \, i \in \mathcal I$ while  $\rank(B_{\mathcal I}) > m - r$.
Then, from \eqref{Eq:minf}, the following equality must be satisfied:
\[
   \xi_{k} + \sum^{m - 1}_{v = 0} z_{k, v} b_{i, v} = 1, \; \forall \, k \in \{0, 1, \ldots, r - 1\} \text{ and } i \in \mathcal I.
\]
Since $\xi_{k}$ does not depends on $i$,
\[
  \sum^{m - 1}_{v = 0} z_{k, v} b_{i_0, v} = \sum^{m - 1}_{v = 0} z_{k, v} b_{i_1,  v} = \cdots = \sum^{m - 1}_{v = 0} z_{k, v} b_{i_{\ell - 1}, v},
\]
for each $k$, which can be equivalently expressed as
\begin{multline}
  \begin{bmatrix} \mathbf z_0 \\ \mathbf z_1  \\ \vdots \\ \mathbf z_{r - 1} \end{bmatrix}
  \begin{bmatrix}
    \mathbf b_{i_1}^T - \mathbf b_{i_0}^T &  \mathbf b_{i_2}^T - \mathbf b_{i_0}^T & \cdots & \mathbf b_{i_{\ell - 1}}^T - \mathbf b_{i_0}^T
  \end{bmatrix}  \\
   = ZB_{\mathcal I} = \mathbf 0_{r \times \ell},
\end{multline}
where $\mathbf z_k = [z_{k, 0}, z_{k, 1}, \ldots, z_{k, m - 1}]$ for $k \in \{0, 1, \ldots, r - 1\}$. Note that the column space of $B_{\mathcal I}$ lies in the null space of $Z$. Since it is assumed that $\rank(B_{\mathcal I}) > m - r$, the rank-nullity theorem
\[
 \rank (Z) + \rank (B_{\mathcal I}) \le m,
\]
leads to the following inequalities:
\[
  \rank(Z) \le m - \rank(B_{\mathcal I})
                     < m - (m - r) = r.
\]
However, for all $\mathbf c \in \mathcal C_{\min}$, the rank of $Z$ must be $r$, i.e., $\rank(Z) = r$, which contracts the assumption, $\rank(B_{\mathcal I}) > m - r$. Thus, it follows that for a codeword $\mathbf c \in \mathcal C_{\min}$ with $c_i = 1$, $\forall i \in \mathcal I$, the rank of $B_{\mathcal I}$ must be less than or equal to $m - r$, i.e., $\rank(B_{\mathcal I}) \le m - r$.

On the other hand, if $\rank (B_{\mathcal I}) \le m-r$, the rank of the null space of $B_{\mathcal I}^T$ is $ m - \rank(B_{\mathcal I}) \ge r$. Then, we can find a matrix $Z$ whose row space lies in the null space of $B_{\mathcal I}^T$ with $\rank(Z) = r$. Then, the codeword $\mathbf c$ which corresponds to the polynomial
\[
  f(x) = \prod_{k = 0}^{r - 1} \left(1 + \sum_{v = 0}^{m - 1} z_{k, v}b_{i_0, v} + \sum_{v = 0}^{m - 1} z_{k, v}x_v \right)
\]
has $c_i = 1$, $\forall\, i \in \mathcal I$, since
\begin{align}
  f(\mathbf b_{i_j})
    & = \prod_{k = 0}^{r - 1} \left(1 + \sum_{v = 0}^{m - 1} z_{k, v}b_{i_0, v} + \sum_{v = 0}^{m - 1} z_{k, v}b_{i_j, v} \right) \nonumber \\
    & = \prod_{k = 0}^{r - 1} \left(1 + \mathbf z_k \cdot (\mathbf b^T_{i_j} - \mathbf b^T_{i_0}) \right) = 1 \label{Eq:Bmr}
\end{align}
where the equality in \eqref{Eq:Bmr} holds since the row space of the matrix $Z$ lies in the null space of $B_{\mathcal I}^T$. Since $\rank(Z) = r$, the codeword $\mathbf c$ is in  $\mathcal C_{\min}$. Thus, it follows that if $\rank (B_{\mathcal I}^T) \le m-r$, there exists a codeword $\mathbf c \in \mathcal C_{\min}$ with $c_{i} = 1$, $\forall\, i \in \mathcal I$.
\end{proof}

\begin{corollary} \label{Col:punc}
For an index, $p'$ and $\ell < d_{\min}$, if there exists $\mathcal P' \subseteq \mathcal P = \{p_0, p_1, \ldots, p_{\ell - 1}\}$ such that $|\mathcal P'| = \Delta d_{\min} (\mathcal P)$ and $\rank(B_{\mathcal I}) \le m - r$ for $\mathcal I = \mathcal P' \cup \{p'\}$, the minimum distance of punctured RM code with $\mathcal P \cup \{p'\}$ decreases by one from $d_{\min}(\mathcal P)$, i.e., $d_{\min}(\mathcal P \cup \{p'\}) = d_{\min}(\mathcal P) - 1$.  Otherwise, the minimum distance of punctured RM code remains the same, i.e.,  $d_{\min}(\mathcal P \cup \{p'\}) = d_{\min}(\mathcal P)$.
\end{corollary}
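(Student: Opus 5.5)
Since $\ell+1 \le d_{\min}$, Theorem \ref{Thm:mw} applies to both $\mathcal P$ and $\mathcal P\cup\{p'\}$: the minimum weight of either punctured code is attained only by images of codewords in $\mathcal C_{\min}$. Hence
\[
  d_{\min}(\mathcal P) = d_{\min} - \max_{\mathbf c\in\mathcal C_{\min}} \Delta w_H(\mathbf c,\mathcal P),
\]
and $\Delta d_{\min}(\mathcal P)$ equals the largest number of punctured positions in which a single minimum weight codeword has ones. The proof reduces to the recursion stated just before Theorem \ref{Thm:Punc}. The minimum distance drops by exactly one (it can drop by at most one, since one more bit is punctured) if and only if some $\mathbf c\in\mathcal C_{\min}(\mathcal P)$ has $c_{p'}=1$. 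Otherwise every $\mathbf c\in\mathcal C_{\min}$ keeps $\Delta w_H(\mathbf c,\mathcal P\cup\{p'\}) \le \Delta d_{\min}(\mathcal P)$, and the distance is unchanged. So what remains is to translate the condition ``$\exists\,\mathbf c\in\mathcal C_{\min}(\mathcal P)$ with $c_{p'}=1$'' into the rank condition.

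\emph{Forward direction.} Suppose $\mathcal P'\subseteq\mathcal P$ with $|\mathcal P'|=\Delta d_{\min}(\mathcal P)$ and $\rank(B_{\mathcal P'\cup\{p'\}})\le m-r$. Note that $|\mathcal P'\cup\{p'\}|\le \ell+1\le d_{\min}$. Theorem \ref{Thm:Punc} then gives a codeword $\mathbf c\in\mathcal C_{\min}$ with ones on all of $\mathcal P'\cup\{p'\}$. Hence $\Delta w_H(\mathbf c,\mathcal P)\ge|\mathcal P'|=\Delta d_{\min}(\mathcal P)$. By maximality, equality holds and $\mathbf c\in\mathcal C_{\min}(\mathcal P)$. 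Since $c_{p'}=1$, the recursion yields $d_{\min}(\mathcal P\cup\{p'\})=d_{\min}(\mathcal P)-1$.

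\emph{Converse.} Suppose some $\mathbf c\in\mathcal C_{\min}(\mathcal P)$ has $c_{p'}=1$. Let $\mathcal P'=\{p\in\mathcal P: c_p=1\}$. By the definition of $\mathcal C_{\min}(\mathcal P)$, $|\mathcal P'|=\Delta w_H(\mathbf c,\mathcal P)=\Delta d_{\min}(\mathcal P)$. The codeword $\mathbf c$ has ones on all of $\mathcal I=\mathcal P'\cup\{p'\}$, and $|\mathcal I|\le d_{\min}$. The ``only if'' part of Theorem \ref{Thm:Punc} therefore gives $\rank(B_{\mathcal I})\le m-r$. By contraposition, if no such $\mathcal P'$ exists, then no codeword of $\mathcal C_{\min}(\mathcal P)$ has a one at $p'$, and the minimum distance remains $d_{\min}(\mathcal P)$.

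\emph{Main obstacle.} The delicate step is justifying the recursion, namely that only $\mathcal C_{\min}$ matters after adding $p'$. This is why the hypothesis is $\ell<d_{\min}$ rather than $\ell\le d_{\min}$: it keeps $\ell+1\le d_{\min}$, so Theorem \ref{Thm:mw} still applies to the enlarged pattern, and the index set fed to Theorem \ref{Thm:Punc} also has size at most $d_{\min}$. I would also check the degenerate case $\Delta d_{\min}(\mathcal P)=0$. There $\mathcal P'=\emptyset$ and $B_{\{p'\}}$ is empty with rank $0\le m-r$, so the distance drops, which agrees with Lemma \ref{Lm:O}.
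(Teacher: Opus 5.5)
Your proposal is correct and takes the same route as the paper: Theorem~\ref{Thm:Punc} combined with the one-step recursion for $d_{\min}(\mathcal P\cup\{p'\})$, with Theorem~\ref{Thm:mw} justifying that only $\mathcal C_{\min}$ needs to be tracked. It is more complete than the paper's proof, which sketches only the forward direction and asserts the ``otherwise'' case; you derive that case from the ``only if'' half of Theorem~\ref{Thm:Punc}, taking $\mathcal P'$ to be the positions in $\mathcal P$ where a codeword of $\mathcal C_{\min}(\mathcal P)$ has ones (and you implicitly assume $p'\notin\mathcal P$, as the paper also does).
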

\begin{proof}
  Due to Theorem \ref{Thm:Punc}, $c_i = 1$, $\forall \, i \in \mathcal I \subseteq \mathcal P \cup \{p'\}$ if $|\mathcal P'| = \Delta d_{\min} (\mathcal P)$ and $\rank(B_{\mathcal I}) \le m - r$ for $\mathcal I = \mathcal P' \cup \{p'\}$, which leads to  $d_{\min}(\mathcal P \cup \{p'\}) = d_{\min}(\mathcal P) - 1$. Otherwise, $d_{\min}(\mathcal P \cup \{p'\}) = d_{\min}(\mathcal P)$.
\end{proof}

\begin{definition}[Partial Support $\mathcal P'_{\delta}$] \label{Def:PSup}
  The partial support $\mathcal P'_{\delta}$ is defined as a subset of $\mathcal P$ whose cardinality is $\delta$, i.e., $\mathcal P'_{\delta} \subseteq \mathcal P$ and $|\mathcal P'_{\delta}| = \delta$.
\end{definition}

\begin{definition}[Rank Condition for $\delta$ and $p'$] \label{Def:RankC}
The rank condition for $\delta$ says whether there exists a partial support $\mathcal P'_{\delta}$ such that $\rank(B_{\mathcal I}) \le m - r$ for $\mathcal I = \mathcal P'_\delta \cup \{p'\}$.
\end{definition}

The implication of Corollary \ref{Col:punc} can be summarized with the partial support and the rank condition defined in Definitions \ref{Def:PSup} and \ref{Def:RankC}, respectively. That is, Corollary \ref{Col:punc} tells that the updated puncturing pattern with a new index $p'$, i.e., $\mathcal P \cup \{p'\}$ decreases the minimum distance by one, i.e., $d_{\min}(\mathcal P \cup \{p'\}) = d_{\min}(\mathcal P) - 1$ if the rank condition is satisfied for $\delta = \Delta d_{\min}(\mathcal P)$. Thus, to avoid the decrease of minimum distance, a new punctured bit index, $p'$ must be decided as the one for which the rank condition is not satisfied. If there is a single choice of such $p'$, the puncturing pattern must be updated with the index, $p'$, so that the minimum distance remains the same even after puncturing, i.e., $d_{\min}(\mathcal P \cup \{p'\}) = d_{\min}(\mathcal P)$. It fulfills the design goal in \eqref{Eq:minNode1}. Meanwhile, if there are multiple choices of such index $p'$ or none, the proposed puncturing takes into account the codewords of weight $d_{\min}(\mathcal P) + 1$. In particular, the index $p'$ is decided to minimize the number of codewords with weight $d_{\min}(\mathcal P) + 1$ which decreases to $d_{\min}(\mathcal P)$ after puncturing. The selection of such a $p'$ is performed by choosing $p'$ in such a way to minimize the set of partial support, $S_{\Sigma}(p', \Delta d_{\min}(\mathcal P) - 1)$, which is defined in Definition \ref{Def:SP}.

\begin{algorithm}[t]
\DontPrintSemicolon
\small
\caption{Proposed Puncturing Algorithm} \label{Alg:Punc}
\KwInput {$m, r, N_p$}
\KwOutput {Puncturing pattern, $\mathcal P^{(N_p)}$}
\KwInit {$\mathcal P^{(1)} = \{0\}, \Delta d^{(1)} = 1$}
\For {$\ell = 2$ \KwTo $N_p$}{
	$\mathcal B \leftarrow \{0,1, \ldots, N - 1\} \backslash \mathcal P^{(\ell - 1)} $  \;
	$r_{\min}  \leftarrow \min_{p\in \mathcal B} |\mathcal S_{\Sigma}(p, \Delta d^{(\ell - 1)})|$ \;
	$\mathcal B'  \leftarrow \argmin_{p\in \mathcal B} |\mathcal S_{\Sigma}(p, \Delta d^{(\ell - 1)})|$ \;
	\If {$|\mathcal B'| > 1$}{
		$\mathcal B'  \leftarrow \argmin_{p\in \mathcal B'}  |\mathcal S_{\Sigma}(p, \Delta d^{(\ell - 1)} - 1)|$	
	}
	$p' \leftarrow \text{Random}(\mathcal B')$ \tcp*[f]{Randomly select from $\mathcal B'$} \;
	$\mathcal P^{(\ell)}  \leftarrow \mathcal P^{(\ell - 1)}  \cup \{ p'\}$ \;
	\eIf {$r_{\min} > 0$}{
		 $\Delta d^{(\ell)} \leftarrow \Delta d^{(\ell - 1)} + 1 $
	}{
		 $\Delta d^{(\ell)} \leftarrow \Delta d^{(\ell - 1)}$
	}
}
\end{algorithm}

\begin{definition}[Set of Partial Supports $\mathcal S_{\Sigma}(p', \delta)$] \label{Def:SP}
  The set of partial supports $S_{\Sigma}(p', \delta)$ contains all the index sets $\mathcal P'_\delta$ satisfying the rank condition for $\delta$ and $p'$.
\end{definition}

The proposed puncturing scheme based on Corollary \ref{Col:punc} is described in Algorithm \ref{Alg:Punc}. In \cite{macwilliams1977theory}, it is shown that when one coded bit of an RM code is punctured, the punctured RM code has the same weight distribution regardless of the index of the punctured bit. Thus, in Step 1, the algorithm starts with the assumption that the first punctured bit index is zero, i.e., $\mathcal P^{(1)} = \{0\}$. In Step 2, the candidates of the punctured bit are stored in the set denoted by $\mathcal B$. In Steps 3 and 4, the rank condition is tested for the candidates in $\mathcal B$, and a subset $\mathcal B' \subseteq \mathcal B$ is determined to contain the candidates minimizing the set of partial supports, $\mathcal S_{\Sigma}(p', \delta)$. Note that $r_{\min}$ in Step 3 tells the minimum cardinality of the set of partial supports. Thus, $r_{\min} = 0$ indicates that there is no such an index satisfying the rank condition. If there exist multiple candidates in $\mathcal B'$, we update $\mathcal B'$ with the candidates that minimize the number of codewords with the minimum weight, $d_{\min}(P^{(\ell - 1)})$ after puncturing by choosing the candidates minimizing $S_{\Sigma}(p', \Delta d_{\min}(\mathcal P^{(\ell - 1)}) - 1)$ in Step 6. Then, a new punctured bit index, $p'$ is randomly selected from $\mathcal B'$, and the puncturing pattern, $\mathcal P^{(\ell - 1)}$ is now updated with $p'$ in Step 8, which finally results in the puncturing pattern $\mathcal P^{(\ell)}$. In Step 10, the minimum distance reduces by one if the $\ell$-th punctured bit, $p'$ satisfies the rank condition, i.e, $r_{\min} > 0$. Otherwise, the minimum distance remains the same. The design of the puncturing pattern continues until the length of the puncturing pattern reaches $N_p$, i.e. $\ell = N_p$.

\begin{figure}[t]
 \centerline{\includegraphics[width=\figwidth]{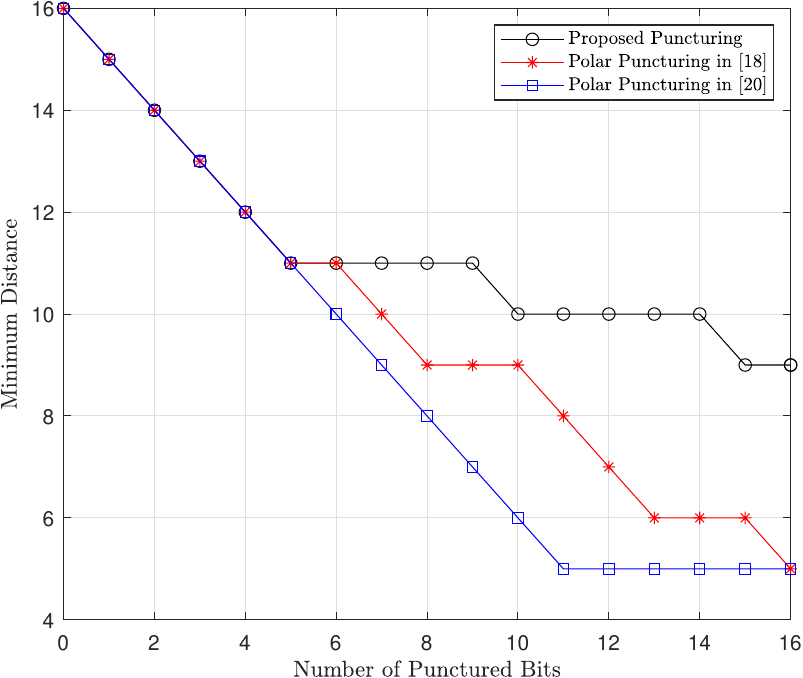}}
	\caption{Minimum distance after puncturing for RM(7, 3) code with the proposed puncturing scheme.} \label{Fig:DminTest}
\end{figure}

In Fig. \ref{Fig:DminTest}, we compare the minimum distance of the punctured polar codes using the proposed puncturing scheme and the existing schemes in \cite{li2019optimal, Han2022rate}. The punctured polar codes are designed by puncturing an RM$(7, 3)$ code whose minimum distance is $d_{\min} = 2^{m-r} = 16$. In Fig. \ref{Fig:DminTest}, it is observed that the scheme in \cite{Han2022rate} provides punctured polar codes whose minimum distance linearly decreases until 11 bits are punctured. Similar behaviors of the minimum distance are witnessed for the punctured polar codes with the scheme in \cite{li2019optimal}. Meanwhile, the comparison in Fig. \ref{Fig:DminTest} clearly demonstrates that the proposed scheme significantly reduces the loss of minimum distance due to the puncturing as compared to the existing schemes \cite{li2019optimal, Han2022rate}.

In the case when the number of punctured bits, $N_p$ is larger than the minimum distance of the base RM code or the base code is not an RM code, the decrease of minimum distance due to the puncturing can be estimated by performing SCLD with a large list size in \cite{li2012adaptive}. While it often requires large computing power and memory requirements, Algorithm \ref{Alg:Punc} can be executed with the numerical estimation of minimum distance in \cite{li2012adaptive}. However, it will be shown that polar codes designed with the proposed scheme have their best performance at $\ell \le d_{\min}$ when an RM code is assumed as the base code. In Section \ref{Sec:Sim}, it will also be demonstrated that the proposed design can be applied to base polar codes other than RM codes.

\section{Reinforcement Learning Based Polar Code Extending} \label{Sec:Ext}
After designing a puncturing pattern, the proposed scheme designs an extending pattern consisting of $N_E$ extended bits. The existing schemes \cite{chen2013hybrid, saber2015incremental} for designing extending patterns mainly focus on improving the SCD performance. However, the SCD-optimized solutions do not necessarily perform well for SCLD of polar codes. In this section, we instead formulate the problem of designing extending patterns with RL and find SCLD-optimized extending patterns for polar codes.

\subsection{Formulation of Designing Extending Patterns as Finite Markov Decision Process}
The problem of designing extending patterns for polar codes can be formulated as an FMDP which is defined with a tuple $\left( \mathcal S,  \mathcal A, P(S'|S, a), r(S, S', a) \right)$. In the definition, $\mathcal S$ and $\mathcal A$ are the state space and the action space, respectively. $P(S'|S, a)$ is the transition probability from a state $S \in \mathcal S$ to a state $S' \in \mathcal S$ with an action $a \in \mathcal A$, and $r(S, S', a)$ is the reward function for the transition from $S$ to  $S'$ with an action $a$. At each time-step $t$, the agent observes the current state $S^{(t)}$ and selects an action $a^{(t)}$ based on the policy $\pi$. The agent interacts with the environment via actions, and the environment provides the reward $r(S^{(t)}, S^{(t + 1)}, a^{(t)})$ to the agent. The next state, $S^{(t+1)}$ is determined by the transition probability $P(S^{(t+1)} | S^{(t)}, a^{(t)})$ given the current state $S^{(t)}$, and the action $a^{(t)}$. The FMDP aims to learn the policy that maximizes the cumulative reward over time.

Now, we formulate the design of extending patterns as a process with a state $S^{(t)}$ which follows the definition of extension matrix defined in Definition \ref{Def:EM} and represented by
\[
  S^{(t)} =
  \begin{bmatrix}
    s_{0, 0}^{(t)} & s_{1, 1}^{(t)} & \cdots & s_{0, m}^{(t)} \\
    s_{1, 0}^{(t)} & s_{1, 1}^{(t)} & \cdots & s_{1, m}^{(t)} \\
     \vdots & \vdots  & \ddots & \vdots  \\
    s_{N - 1,0}^{(t)} & s_{N - 1,1}^{(t)} & \cdots & s_{N - 1,m}^{(t)}
  \end{bmatrix}.
\]
Note that $s_{i, j}^{(t)} = e$ for $j < m$ tells that the bit value at node $(i, j)$ in PCG is repeated $e$ times. Meanwhile, for an unpunctured coded bit, i.e., $(i, j)$ for $i \not \in \mathcal P^{(N_P)}$ and $j = m$, the bit value is repeated and transmitted $e$ times in addition to the coded bit. Thus, the coded bit will be transmitted in total $e + 1$ times. The state is initialized to $S^{(0)} =  \mathbf 0_{N\times (m+1)}$.

An action is defined as the index of a node $a^{(t)} = (i, j) \in \mathcal A$, where $i \in \{0, 1, \ldots, N - 1\}$ and $j \in \{0, 1, \ldots, m\}$. The action indicates that the node with the index of $a^{(t)}$ is repeated and transmitted at time-step $t$. For a given action, the state is updated as follows:
\begin{equation} \label{Eq:State}
s^{(t + 1)}_{i, j} =
  \begin{cases}
    s^{(t)}_{i, j} + 1  & \text{if } a^{(t)} = (i, j), \medskip\\
    s^{(t)}_{i, j}        & \text{otherwise},
  \end{cases}
\end{equation}
for all $0 \le i \le N-1$ and $0 \le j \le m$. The transition between states $S^{(t+1)}$ and $S^{(t)}$ for a given action $a^{(t)}$ is deterministically decided as shown in \eqref{Eq:State}, and thus, the transition probability $P(S^{(t+1)}|S^{(t)}, a^{(t)})$ is either zero or one. At each time-step, the action increases the value of an element in the state by one. Thus, the sum of the state at time-step $t$ amounts to
\[
  \sum_{i, j} s_{i, j}^{(t)} = t.
\]

The state transition is performed until the time-step $t$ reaches $N_E$, i.e., the maximum number of extended bits, and the collection of $N_E$ time-steps is called an \emph{episode}. 

Then, the reward, $r(S^{(t)}, S^{(t + 1)}, a^{(t)})$ is defined to measure the SCLD performance in each episode. To this end, we first transmit random codewords over a physical channel and collect a set of $N_F$ channel outputs, $F^{(0)} = \{\mathbf y_{F, 0}, \mathbf y_{F, 1}, \ldots, \mathbf y_{F, N_F - 1}\}$, that incur SCLD failure. At each time-step $t$, a node $a^{(t)} = (i, j)$ is selected as an action by the agent. The bit value of the node $a^{(t)}$ is repeated and transmitted through the physical channel. The transmission provides extra information to the decoder and thus, the SCL decoder may successfully decode some channel outputs in $F^{(t)}$, and let $D^{(t)} \subseteq F^{(t)}$ be the set of the recovered channel outputs. Note that when the number of channel outputs $N_F$, is sufficiently large, the number of successfully decoded channel outputs can be utilized as a proportional measure of SCLD performance with the extended bit at $a^{(t)}$. Thus, we set the reward as the number of successfully decoded channel outputs $|D^{(t)}|$ after performing the action, i.e., $r(S^{(t)}, S^{(t + 1)}, a^{(t)}) = |D^{(t)}|$. Once the reward is obtained, the set of the channel outputs is updated as $F^{(t+1)} \leftarrow F^{(t)} \backslash D^{(t)}$. 
Since the reward measures the SCLD performance after extending, the policy of action maximizing the reward turns out to be a design rule of extending patterns {that} maximize the SCLD performance.

From \eqref{Eq:State}, we note that the process of designing extending patterns satisfies the Markov property. That is, the transition probability of the state $S^{(t + 1)}$ depends only on the state of the previous time-step, i.e., $S^{(t)}$, 
\begin{equation}\label{Eq:MDP}
  \Pr (S^{(t+1)}|S^{(0)}, S^{(1)}, \ldots, S^{(t)}) = \Pr (S^{(t+1)}|S^{(t)}).
\end{equation}
In addition, since the state is updated for a finite number of time-steps, i.e., $t \le N_E$, and the number of actions is the same as the number of selected nodes in PCG, both the state and action spaces are finite. Thus, the process representing the design of extending patterns is an FMDP, and the design problem, i.e., finding the policy of FMDP, can be efficiently solved using RL techniques, which will be discussed in Section \ref{SubSec:QL}.

\subsection{Q Learning and Deep Q Learning} \label{SubSec:QL}
It is known \cite{watkins1992q} that the optimal policy for an FMDP can be obtained with $Q$-learning which is a mode-free RL algorithm estimating the expected total reward given a pair of state and action. The expected reward value is also called \emph{Q-value}, which is defined as
\begin{equation} \label{Eq:QUpdate}
    Q(S, a) = \sum_{S'} P(S' | S, a) \left(r(S, S', a) + \gamma \max_{a' \in \mathcal A} Q(S', a')\right),
\end{equation}
where $\gamma \in [0,1]$ is the discount factor which indicates the importance of the future reward. Since our model is based on a deterministic transition between states as shown in \eqref{Eq:State}, the $Q$-value is simplified as
\begin{equation}
    Q(S, a) = r(S, S', a) + \gamma  \max_{a' \in \mathcal A} Q(S', a'),
\end{equation}
where $S'$ is the next state for the state $S$ and the action $a$. During the learning process, the $Q$-value is computed as follows:
\begin{multline*}
    Q(S, a) \\ \leftarrow (1-\alpha) Q(S, a) + \alpha \left(r(S, S', a) + \gamma \max_{a' \in \mathcal A} Q(S', a') \right),
\end{multline*}
where $\alpha \in [0, 1]$ is the learning rate. To ensure sufficient exploration, we choose an action based on the $\epsilon$-greedy policy \cite{sutton2018reinforcement} defined as follows:
\[
  \pi_{\epsilon}(S) =
  \begin{cases}
    \argmax_{a \in \mathcal A} Q(S, a) &\text{  with probability  } 1-\epsilon , \\
    \text{random } a \in \mathcal A       &\text{  with probability  } \epsilon,
  \end{cases}
\]
where $\epsilon = \max(\epsilon_{\min}, (1 - \beta)^{t_e})$ for the training-step $t_e$ and the decay rate $\beta$, and $\epsilon_{\min} \in [0, 1]$ is the minimum value of $\epsilon$. The value of $\epsilon$ gradually decreases down to $\epsilon_{\min}$ over training episodes.

In $Q$-learning, the $Q$-values for all state-action pairs can be stored in a look-up table. However, as the dimensions of the state and action spaces grow, the complexity and memory requirements for training and storing the look-up table become prohibitively high. To overcome the problems, we instead utilize deep $Q$-learning (DQL) \cite{mnih2015human} which approximates the $Q$-table with a deep neural network (DNN) denoted by $q_{\boldsymbol \phi}$ with a set of trainable parameters $\boldsymbol \phi$. The DNN $q_{\boldsymbol \phi}$ outputs the approximated $Q$-values, i.e., $q_{\boldsymbol \phi}(S, a)$ for an action $a$ when a state $S$ is given as an input to the DNN. The loss function for training $q_{\boldsymbol \phi}$ is defined as
\[
  \mathcal L(\boldsymbol \phi) = \left(r(S, S', a) + \gamma \max_{a' \in \mathcal A} q_{\phi}(S', a') - q_{\phi}(S, a)  \right)^2.
\]

\begin{figure}[t]
 \centerline{\includegraphics[width=0.99\columnwidth]{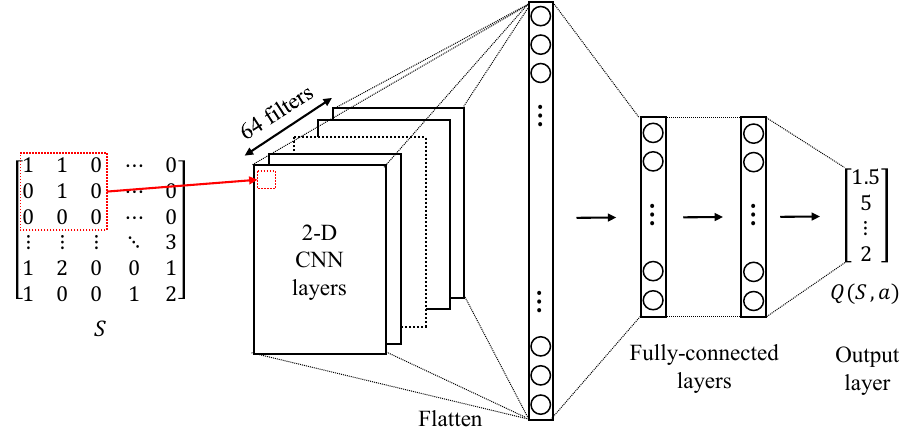}}
	\caption{Estimation of $Q$-values by using CNN for designing extending patterns.} \label{Fig:DQN}
\end{figure}

In Fig. \ref{Fig:PG}, it is noticed that the values of nodes in the PCG are correlated with each other, i.e., spatially correlated, which is well represented with a convolutional neural network (CNN). An implementation of CNN for $Q$-learning is depicted in Fig. \ref{Fig:DQN}, where the state $S$ is the input to the convolutional layer which consists of 64 filters of size $3 \times 3$. The output of the convolutional layers is flattened into a single vector and passed to two fully-connected layers with the rectified linear unit (ReLu) activation function. After the fully-connected layers, the output layer produces the estimated $Q$-values for all possible actions, i.e., $q_{\boldsymbol \phi}(S, a)$, $\forall a \in \mathcal A$.

To improve the stability of the DQL, we use the target network \cite{van2016deep} and replay buffer \cite{lin1992self} for training. In DQL, a single DNN is used for both the selection and evaluation of action, resulting in an overestimation of the $Q$ values \cite{van2016deep}. Such a problem can be mitigated by introducing an independent DNN called target network (denoted by $q_{\boldsymbol \phi'}$) for the action evaluation. Then, the loss function is now re-defined as follows:
 \[
  \mathcal L(\boldsymbol \phi, \boldsymbol \phi') = \left(r(S, S', a) + \gamma \max_{a' \in \mathcal A} q_{\boldsymbol \phi'}(S', a') - q_{\boldsymbol \phi}(S, a)  \right)^2.
 \]
The DNN $q_{\boldsymbol \phi}$ is updated with the back-propagated gradient of $\mathcal L(\boldsymbol \phi, \boldsymbol \phi')$. On the other hand, the parameter for the target network $q_{\boldsymbol \phi'}$ is updated as
\[
   \boldsymbol \phi' \leftarrow (1 - \kappa) \boldsymbol \phi' + \kappa \boldsymbol \phi,
\]
where $\kappa \in [0, 1]$.
Furthermore, we introduce the replay buffer \cite{lin1992self} to store the experiences $(S, a, r, S')$ which are sampled uniformly to update the DNNs. The replay buffer efficiently prevents the dependency between samples, which generates independent samples for training. After sufficient training, the DNN $q_{\boldsymbol \phi}$ becomes capable of closely estimating the $Q$-values. Then, we can find the SCLD-optimized extending pattern $\mathcal E$ from the state ${S^{(N_E)}}$ at the end of an episode where the actions are selected as $a^{(t)} = \argmax_{a' \in \mathcal A} q_{\boldsymbol \phi}(S^{(t)}, a')$ for the state transition during the episode, i.e, $1 \le t \le N_E$.

\subsection{Multi-stage Learning and State Reduction} \label{Subsec:SR}
The number of candidates for extending, i.e., nodes in a PCG, amounts to $\mathcal O (N \log N)$. As the number of extended bits $t$ grows, the size of state space grows exponentially fast, i.e., $|\mathcal S^{(t)}| = {}_{(N\log N + N + t - 1)}C_t$. The exponential growth of the state space not only makes the training complexity unmanageable but also deteriorates the training performance for a finite training set. To resolve the technical issue, this work considers two approaches: 1) multi-stage learning and 2) reduction of state and action spaces. The multi-stage learning reduces the learning complexity by splitting the learning problem into $T$ learning stages. In particular, the multi-stage learning conducts the learning for DNN only for $\ell_T = N_E/T$ (it is assumed that $N_E$ is a multiple of $T$ without loss of generality) extended bits in each learning stage with the state $S^{(t)}_{\lambda}$ where $0 \le t \le \ell_T - 1$, and $0 \le \lambda \le T - 1$ is the learning stage index. At the initial learning stage, the state is initialized with the all-zero matrix, i.e., $S^{(0)}_0 = \mathbf 0_{N\times(m+1)}$ with which the learning for DNN is performed for the first $\ell_T$ extended bits. Then, the learning for the first stage provides a partial set of extended bits, $\mathcal E_0$ containing $\ell_T$ extended bits. After the first learning stage, the state becomes $S^{(\ell_T - 1)}_0$. In the sequel stage, the learning for DNN is fulfilled for the next $\ell_T$ extended bits with $S^{(\ell_T - 1)}_0$ as its initial state, i.e., $S^{(0)}_1 = S^{(\ell_T - 1)}_0$. After performing the learning for stage $\lambda$, the state $S^{(\ell_T - 1)}_{\lambda}$ is obtained, which provides a partial set of extended bits $\mathcal E_\lambda$, where $|\mathcal E_\lambda| = (\lambda + 1)\ell_T$. That is, in each learning stage, $\ell_T$ new extended bits are determined and stored in the partial set of extended bits. The learning for the final learning stage provides $N_E$ extended bits in $\mathcal E_{T - 1}$ which is the designed extending pattern, $\mathcal E$. The multi-stage learning limits the size of state space to ${}_{(N\log N + N + \ell_T - 1)}C_{\ell_T}$, while it goes up to ${}_{(N\log N + N + N_E - 1)}C_{N_E}$ in the conventional learning.

In addition to the multi-stage learning, we propose a novel scheme that greatly reduces the state and action spaces and also the learning complexity. At each time-step $t$, the learning and extended bit selection are performed for a reduced set of nodes, $\mathcal R^{(t)}_{\lambda} \subset \mathcal N$. The proposed scheme makes the reduced set of nodes $\mathcal R^{(t)}_\lambda$ by selecting nodes in PCG that have strong contributions to SCLD performance when they are extended. To this end, it is necessary to have an analytic way to measure such contributions, which unfortunately, to the best of our knowledge, is not available. To turn around the technical obstacle, we instead select nodes by analyzing the reliability propagation from the channel outputs to the information bits through the information trees during SCD, which can be readily performed with the density evolution (DE) technique \cite{mori2009performance, trifonov12efficient}. It is known that the word-error rate (WER) depends on the \emph{weakest} bit-channel reliability. The information bit with the weakest bit-channel reliability will be hereafter called weakest information bit for short. By applying DE, we can identify the weakest information bit and evaluate the reliability values built up at all nodes in PCG after SCD.

The proposed scheme selects nodes in such a way to efficiently improve the reliability of the weakest information bit. In particular, the proposed scheme carefully selects nodes from the descendant nodes in the tree of the weakest information bit since the reliability of each information bit is determined by the reliability propagated through the tree of the information bit. The selection is performed in a greedy manner. Initially, the proposed scheme selects the node for the weakest information bit since the repetition for the weakest information bit directly improves its reliability. Thus, the node for the weakest information bit is the selected node at stage $0$. Then, for stage $j > 0$, a node at stage $j$ is selected among the two child nodes of the parent node, i.e. the selected node at stage $j - 1$, to maximize the reliability of the parent node. The search for the node in $\mathcal R^{(t)}_\lambda$ proceeds until the stage $m$ is reached. Since the weakest information bit may change after a node is extended with $\mathcal R^{(t)}_\lambda$, the set $\mathcal R^{(t + 1)}_\lambda$ has to be updated by conducting DE and performing the search for the nodes in the tree of the new weakest information bit in a greedy manner.

Now, it will be discussed how to determine one of the two child nodes to maximize the reliability of its parent node when the selected node is extended. Note that the reliability of a parent node is determined by the two child nodes with different message indices, say $i$ and $i'$ among which let us assume the index $i$ as the message index of the parent and one child. For example, node (4, 1) in Fig. \ref{Fig:PG} has its two child nodes (4, 2) and (6, 2). In the case that $b_{i, j} = 0$, the LLR value of the parent node $(i, j)$ is given by
\begin{equation} \label{Eq:Boxplus}
    \Lambda_{i, j} = \Lambda_{i, j+1} \boxplus \Lambda_{i', j+1},
\end{equation}
where
\[
  x \boxplus y  = \ln \left (\frac{\cosh \frac{x+y}{2}}{\cosh \frac{x - y}{2} } \right)
	              \approx \sign(x)⋅\sign(y) \min[|x|, |y|],
\]
and $\Lambda_{i,j}$ denotes the LLR value at node $(i, j)$. The reliability is measured as the magnitude of LLR value and expressed as the minimum of the reliabilities for the child nodes, i.e., $\min[|\Lambda_{i, j + 1}|, |\Lambda_{i', j + 1}|]$. Therefore, for the case of $b_{i, j} = 0$, the child node with the smaller reliability is selected to be included in $\mathcal R^{(t)}_\lambda$ since it has a stronger impact on the reliability of the parent node. As discussed, we can acquire the reliability values at all nodes by performing DE, which allows us to identify the one with smaller reliability. Meanwhile, when $b_{i, j} = 1$, e.g., the parent node (3, 1) and its two child nodes (3, 2) and (1, 2), the reliability of the parent node $(i, j)$ is determined as
\begin{equation} \label{Eq:SC}
   \Lambda_{i, j} = (1 - 2 u_{i', j}) \Lambda_{i', j+1} + \Lambda_{i, j+1},
\end{equation}
where $u_{i', j}$ is the hard-decision value of node $(i', j)$. Then, the LLR value of the parent node is the weighted sum of LLR values for the child nodes. Since the hard-decision result $u_{i, j}$ may be erroneous, it is better to choose the child $(i, j + 1)$ as the node at stage $j$. For example, in Fig. \ref{Fig:PG}, the nodes represented by the double circles, i.e., $(4, 0), (4, 1), (6, 2)$, and $(6, 3)$, are selected from the tree of the weakest information bit $(4, 0)$, i.e., $\mathcal R^{(t)}_\lambda = \{(4, 0), (4, 1), (6, 2), (6, 3)\}$.

As discussed in Section \ref{Sec:Polar}, puncturing leads to zero reliability values at some nodes in PCG, which deteriorates the SCLD performance. Thus, when extending patterns are designed with punctured polar codes, the reduced set $\mathcal R^{(t)}_\lambda$ includes the zero set, i.e., $\mathcal N_z$. However, some of the nodes in $\mathcal N_z$ may have already been extended in a previous learning stage, and the the zero set for $\lambda > 0$ must also be updated as follows:
\[
    \mathcal N_{z, \lambda} = \mathcal N_{z, \lambda - 1} \setminus \mathcal E_{\lambda - 1}
\]
where $\mathcal N_{z, \lambda}$ is the zero set at learning stage $\lambda$, and  $\mathcal N_{z, 0} = \mathcal N_z$. Then, the reduced set $\mathcal R^{(t)}_\lambda$ is given by $\mathcal R^{(t)}_\lambda = \mathcal R^{(t)}_\lambda \cup \mathcal N_{z, \lambda}$. Since the learning and design are conducted only for the reduced set $\mathcal R^{(t)}_\lambda$, the size of action space is much reduced to $|R^{(t)}_\lambda| = m + 1 + |\mathcal N_{z, \lambda}|$, which significantly expedites the design process.

\section{Simulation Results}\label{Sec:Sim}
In this section, we demonstrate the efficacy of our proposed design scheme for polar codes by conducting extensive performance evaluations and comparisons. Firstly, we compare the performance of extended polar codes designed with the proposed scheme and an existing scheme in \cite{saber2015incremental}. Then, we carry out performance evaluations of the polar codes designed with the proposed scheme which consists of both puncturing and extending. In addition, the performance of designed polar codes are compared with that of the state-of-the-art polar codes.

For designing extending patterns, we set the parameters for RL as follows: $\kappa = 0.01$, $\beta = 0.005$, $\epsilon_{\min} = 0.01$, and the discount factor $\gamma$ and the experience buffer size are set to 0.99 and $10^4$, respectively. The number of failed samples $N_F$ is set to $100$ for the reward function. The training for the RL is conducted at the starting signal-to-noise ratio (SNR) of the waterfall region (i.e., the region where the WER starts to decrease in the base code) since the decoding performance depends on the decision boundary of the codewords\cite{kim2018communication}. The training is conducted with a batch size of 64 using the adaptive moment estimation (Adam) optimizer \cite{kingma2014adam} with a learning rate $\alpha = 0.01$. For the design parameters, we perform the proposed extending and the SCD-based extending scheme in \cite{saber2015incremental} for $N_E = 10$ to a practical polar code, i.e., a (64, 32) polar code taken from 3GPP NR\cite{3gpp2019nr}. The training for the proposed scheme is conducted with an SCL decoder of list size $L = 8$ at SNR = 1dB.

\begin{figure}[t]
 \centerline{\includegraphics[width=\figwidth]{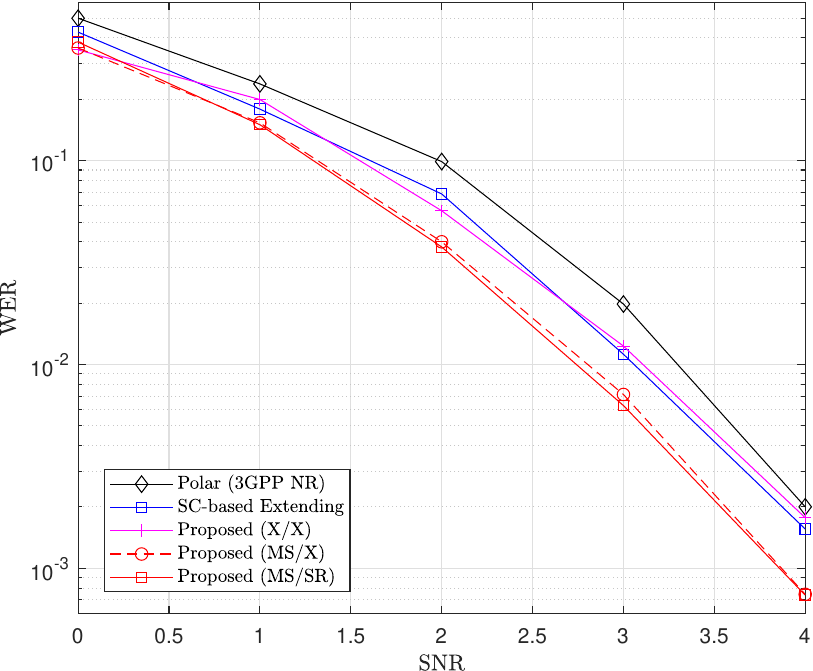}}
	\caption{WER performance comparison of extended polar codes designed with the base code of (64, 32) 3GPP NR polar code decoded with the SCL decoder having $L=8$ and $N_E = 10$. } \label{Fig:PolarExt}
\end{figure}

The performance of the designed extended polar codes is evaluated in terms of WER with the SCL decoder of list size $L = 8$ in Fig. \ref{Fig:PolarExt}, where the proposed scheme without either the multi-stage or state reduction in Section \ref{Subsec:SR} (denoted by X/X) has similar performance to that of SCD-based extending scheme. For short, we will denote the multi-stage and state reduction by MS and SR, respectively. However, the proposed scheme with MS (denoted by MS/X) provides a significant performance gain when $T = 2$ $(\ell_T = 5)$ is assumed to train the RL. In addition, the results in Fig. \ref{Fig:PolarExt} demonstrate that SR allows us to achieve the improved performance at drastically reduced complexity. The extended polar code designed with both MS and SR exhibits approximately 0.35 dB gain as compared to the existing SC-based scheme.

Now, we design polar codes with the proposed design scheme using both puncturing and extending for a (128, 64) RM code as the base code. The design of extended bits is conducted with an SCL decoder of $L = 8$ and MS of $T = 3$ (i.e., $\ell_T = 4$) at SNR = 2dB. To determine the number of transmission holes, the proposed design is performed for different numbers of transmission holes, and the designed polar codes are evaluated in terms of WER at SNR of 2dB in Fig. \ref{Fig:RMDesign}. It is observed that the best performance is obtained at $N_h = 12$ which is less than the minimum distance of the base (128, 64) RM code, i.e., $d_{\min} = 16$. Thus, the puncturing patterns can be designed with Algorithm \ref{Alg:Punc}. In Fig. \ref{Fig:RMDesign}, the WER performance of the designed code with $N_h = 12$ is compared with the (128, 64) polar code in the 3GPP NR standard and the (128, 64) RM-polar code in \cite{li2014rm}. The comparison clearly shows that the proposed polar code outperforms all the other competing polar codes. In particular, the proposed polar code achieves an SNR gain of 0.4dB at a WER of $10^{-2}$ as compared to the 3GPP NR polar code.

\begin{figure}[t]
 \centerline{\includegraphics[width=\figwidth]{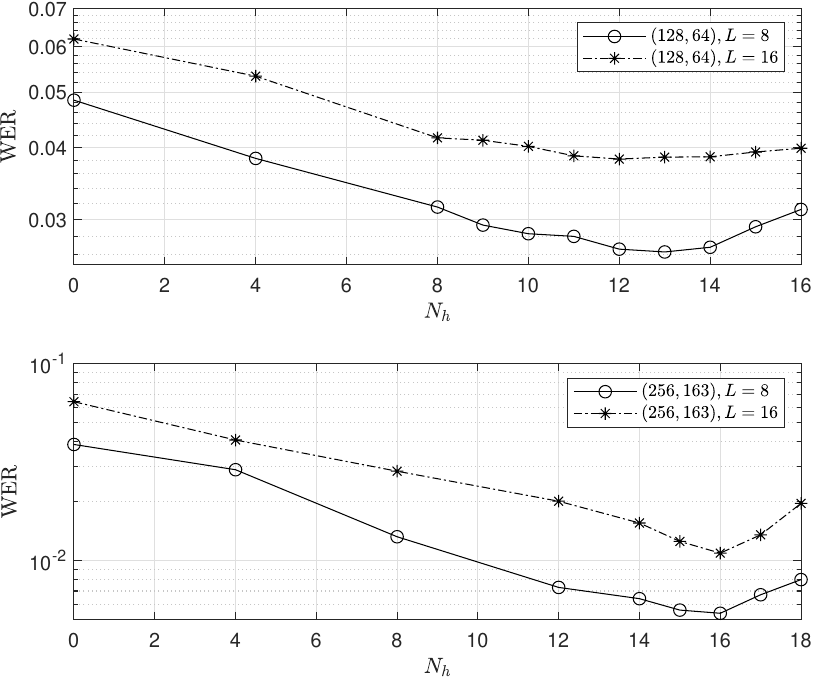}}
	\caption{WER performance of (128, 64) and (256,163) proposed polar codes decoded with the SCL decoder having $L = 8$ and $16$, for different numbers of transmission holes.} \label{Fig:RMDesign3}
\end{figure}

\begin{figure}[t]
 \centerline{\includegraphics[width=\figwidth]{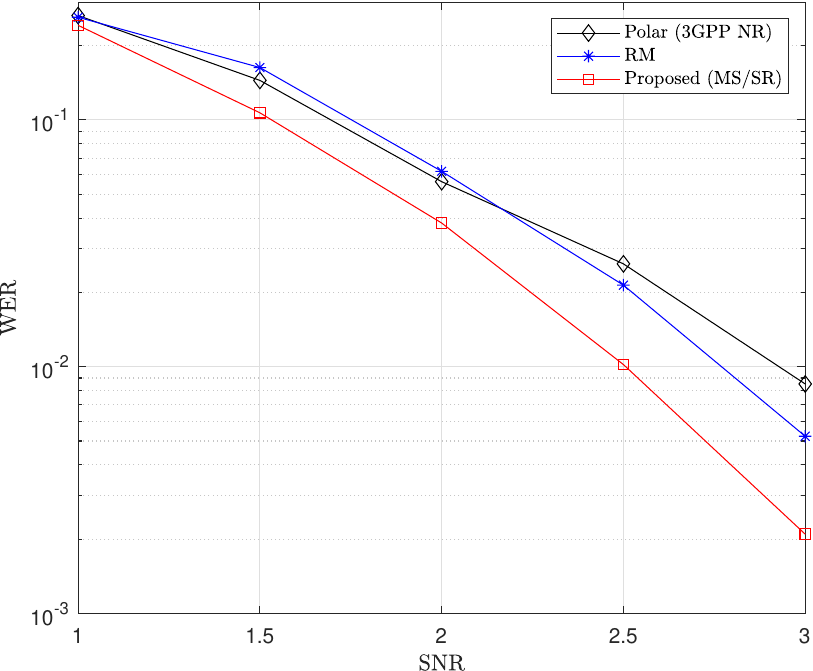}}
	\caption{WER performance comparison of (128, 64) polar codes decoded with the SCL decoder having $L=8$.} \label{Fig:RMDesign}
\end{figure}

\begin{figure}[t]
 \centerline{\includegraphics[width=\figwidth]{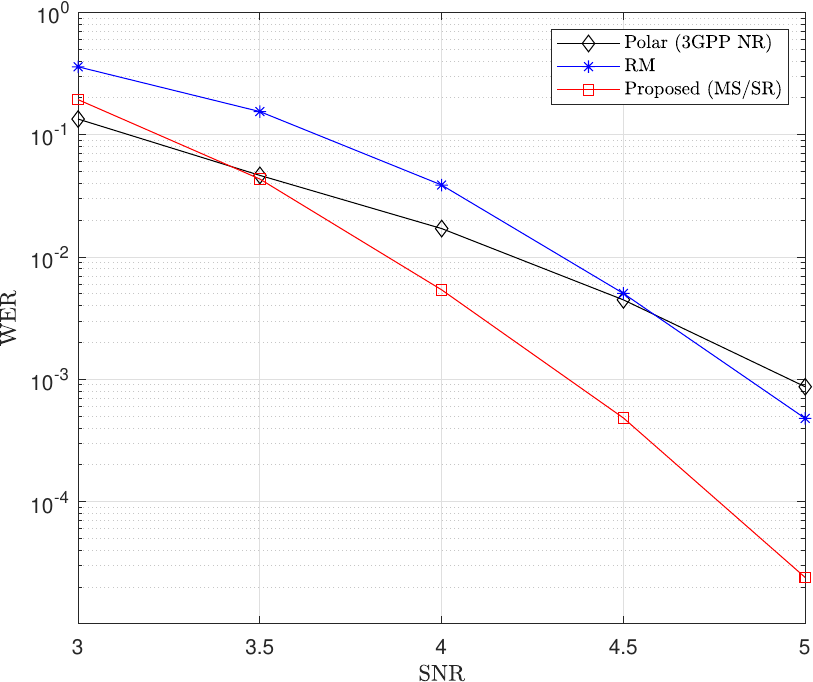}}
	\caption{WER performance comparison of (256,163) polar codes decoded with the SCL decoder having $L = 16$.} \label{Fig:RMDesign3}
\end{figure}

\begin{figure}[t]
 \centerline{\includegraphics[width=\figwidth]{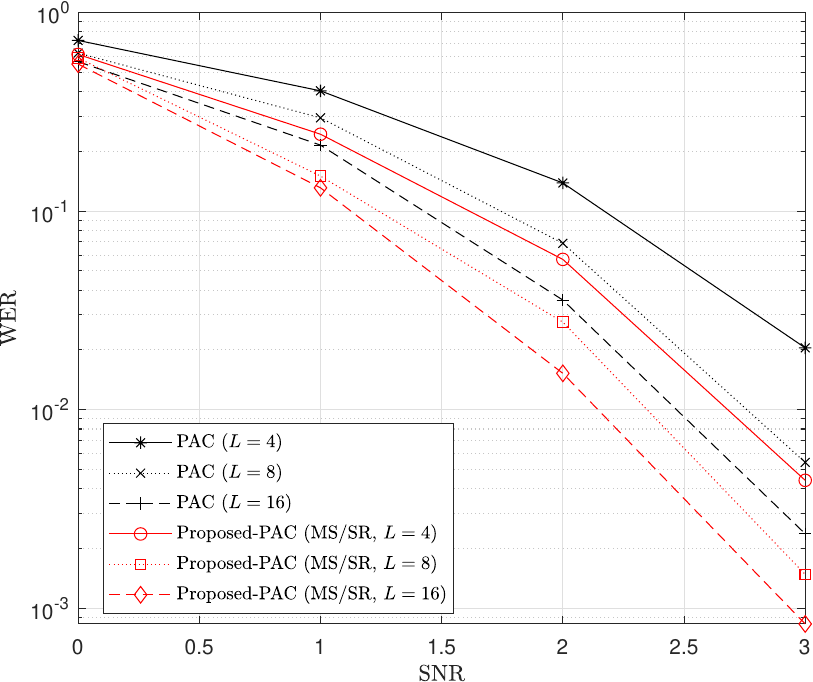}}
	\caption{WER performance comparison of (128,64) PAC codes decoded with the SCL decoder having $L = 4, 8$, and $16$.} \label{Fig:PACDesign}
\end{figure}

The proposed design is applied to a longer code of a higher rate, i.e., a (256, 163) RM code. The number of transmission holes is found by conducting the design for different numbers of transmission holes. The results are shown in Fig. \ref{Fig:RMDesign} where it is noticed that the best performance is obtained at $N_h = 16$. Since the minimum distance of the (256, 163) RM base code is equal to $N_h = 16$, the proposed puncturing algorithm in Algorithm \ref{Alg:Punc} can be utilized for the code design. The code design is performed at SNR of 4dB where the error rate starts to decrease. The performance comparisons in Fig. \ref{Fig:RMDesign3} show that the performance gain with the proposed polar code becomes more distinctive as the code length increases. Compared to the 3GPP NR polar code, the designed code has a coding gain larger than $0.7$dB at a WER of $10^{-3}$.

The proposed scheme utilizes the developed theory of minimum distance when RM codes are assumed as the base code. However, the proposed scheme can be applied to any polar codes by estimating the minimum distance of punctured codes with an SCL decoder \cite{li2012adaptive}. Finally, we will demonstrate how to apply the proposed scheme to a general type of polar codes, e.g., PAC codes \cite{arikan2019sequential}. The design is carried out with a (128, 64) PAC code with an RM profile and the generator polynomial of 0o133 in octal format with constraint length 7 \cite{rowshan2021polarization} as the base code. The change in minimum distance after puncturing the base PAC code is estimated with an SCL decoder of $L = 1024$. Then, the extended bits are designed with $L = 8$, $T = 4$, and $N_h = 12$ ($\ell_T = 3$) at SNR = 2dB. In Fig. \ref{Fig:PACDesign}, the WER performance of the designed code with the PAC code is evaluated with an SCL decoder and compared with the base PAC code. To see the performance variation with different sizes of list, the performance evaluation is conducted with $L = 4$, 8, and 16. Then, it is observed that the proposed polar code has notable performance gains regardless of the list size.

\section{Conclusion}\label{Sec:Con}
In this paper, we propose a novel design scheme for polar codes optimized for the SCLD based on puncturing and extending. In particular, we utilize the minimum distance optimized polar codes as base codes and propose an algorithm that minimizes the decrease of the minimum distance while performing puncturing to generate a degree of freedom on transmission. Since it is intractable to analyze the SCLD performance, we model the problem of designing extending patterns as an FMDP and find SCLD-optimized extending patterns using RL techniques. Moreover, a state reduction scheme is proposed to reduce the training complexity and improve the performance of the trained results. Performance comparisons show considerable error-rate performance improvement for the proposed scheme compared to the polar codes designed with existing schemes. We also show that the proposed scheme improves the performance of modified polar codes, e.g., PAC code. Although this work focuses on performance improvement with the assumption of $N_P = N_E = N_h$, the proposed design provides flexibilities in the code length and rate by considering different numbers of punctured and extended bits, i.e., $N_P \neq N_E$, in addition to the improved performance.

\bibliographystyle{IEEEtran}
\bibliography{DesignPolarPE.bib}
\IEEEtriggeratref{3}

\end{document}